\newtheorem{theorem}{Theorem}
\newcommand{\distas}[1]{\mathbin{\overset{#1}{\kern\z@\sim}}}%
\newsavebox{\mybox}\newsavebox{\mysim}
\newcommand{\distras}[1]{%

  \savebox{\mybox}{\hbox{\kern3pt$\scriptstyle#1$\kern3pt}}%

  \savebox{\mysim}{\hbox{$\sim$}}%

  \mathbin{\overset{#1}{\kern\z@\resizebox{\wd\mybox}{\ht\mysim}{$\sim$}}}%

}
\newcolumntype{C}[1]{>{\centering\let\newline\\\arraybackslash\hspace{0pt}}m{#1}}
\newcommand{\be}{\begin{equation}}
\newcommand{\ee}{\end{equation}}
\newcommand{\bi}{\begin{itemize}}
\newcommand{\ei}{\end{itemize}}
\newcommand{\ben}{\begin{enumerate}}
\newcommand{\een}{\end{enumerate}}
\let\oldbibliography\thebibliography
\renewcommand{\thebibliography}[1]{\oldbibliography{#1}

\setlength{\itemsep}{0pt}} %Reducing spacing in the bibliography.
\newtheorem{prop}{Proposition}
\title{Exploiting Variance Reduction Potential\\ in Local Gaussian Process Search}
\author{
Chih-Li Sung$^{\rm a}$, 
Robert B. Gramacy$^{\rm b}$, 
Benjamin Haaland$^{\rm a}$
\\ 
$^{\rm a}$Georgia Institute of Technology\\ $^{\rm b}$Virginia Tech
}
\date{}
\begin{document}

\maketitle 

\begin{abstract}
Gaussian process models are commonly used as emulators for computer experiments. 
However, developing a Gaussian process emulator can be computationally prohibitive when the number of experimental samples is even moderately large. 
Local Gaussian process approximation \citep{gramacy2015local} was proposed as an accurate and computationally feasible emulation alternative.
%To save substantial computational work, \cite{gramacy2014local} provides local approximate Gaussian process method. 
However, constructing local sub-designs specific to predictions at a particular location of interest remains a substantial computational bottleneck to the technique.
%One problem with this approach, however, is that it is not efficient with limit resources, say a normal destop computer. 
In this paper, two computationally efficient neighborhood search limiting techniques are proposed, a maximum distance method and a feature approximation method. Two examples demonstrate that the proposed methods indeed save substantial computation while retaining emulation accuracy. 
%These methods can be run on a normal desktop computer.

\end{abstract}
\section{Introduction}

Due to %the 
continual advances in computational capabilities,
%upgrade of computer hardware, 
researchers across fields increasingly rely on computer simulations in lieu of prohibitively costly or infeasible physical experiments. 
One example is \cite{eckstein2013computer}, who use computer simulations to investigate the interaction of energetic particles with solids. 
Physical effects such as elastic energy loss when a particle penetrates a solid, particle transmission through solids, and radiation damage are explored.
%Some physical effects are explored such as the elastic energy loss if a particle penetrates a solid target, transmission through solids and radiation damage. 
%The exploration of the process mentioned, however, may involve many atoms and a large number of steps. 
%Instead, 
These processes can be approximated by simulating the trajectories of all moving particles in a solid based on mathematical models. 
An example in linguistics is the study of language evolution \citep{cangelosi2012simulating}, which is 
made challenging by the unobserved nature of language origin.
%even more problematic since the origin of lauguage is unobservable. 
Modeling techniques such as genetic algorithms can be used to simulate the process of natural selection and make it possible to explore a virtual evolution. 
While computer simulations provide a feasible alternative to many physical experiments, 
simulating from mathematical models is often itself expensive, in terms of both time and computation, and many researchers seek inexpensive approximations to their computationally demanding computer models---so-called {\em emulators}.

Gaussian process (GP) models \citep{sacks1989design} play an important role as emulators for computationally expensive computer experiments.  They provide an accurate approximation to the relationship between simulation output and untried inputs at a reduced computational cost, and provide appropriate (statistical) measures of predictive uncertainty.
%Unfortunately, the trouble of 
A major challenge in
building a GP emulator for a large-scale computer experiment is that it necessitates decomposing a large ($N\times N$) correlation matrix. 
For dense matrices, this requires around $O(N^3)$ time, where $N$ is the number of experimental runs. Inference for unknown parameters can demand hundreds of such decompositions to evaluate the likelihood, and its derivatives, under different parameter settings for even the simplest Newton-based maximization schemes.  This means that for a computer experiment with as few as $N=10^4$ input-output pairs, accurate GP emulators cannot be constructed without specialized computing resources. 

There are several recent approaches aimed at emulating large-scale computer experiments, most of which focus on approximation of the GP emulator due to the infeasibility of actual GP emulation. 
Examples include covariance tapering which replaces the dense correlation matrix with a sparse version \citep{furrer2006covariance}, multi-step interpolation which successively models global, then more and more local behavior while controlling the number of non-zero entries in the correlation matrix at each stage \citep{haaland2011accurate}, and multiresolution modeling with Wendland's compactly supported basis functions \citep{nychka2014multi}.
Alternatively, \cite{paciorek2013parallelizing} developed an \textsf{R} package called \texttt{bigGP} that combines symmetric-multiprocessors and GPU facilities to handle $N$ as large as $67,275$ without approximation. 
Nevertheless, computer model emulation is meant to avoid expensive computer simulation, not be a major consumer of it. Another approach, proposed by \citet{plumlee2014fast}, is to sample input-output pairs according to a specific design structure, which leads to substantial savings in building a GP emulator. That method, however, can be limited in practice due to the restriction to sparse grid designs.

In this paper, \cite{gramacy2015local}'s local GP approach is considered. The approach is modern, scalable and easy to implement with limited resources. 
The essential idea focuses on approximating the GP emulator at a particular location of interest via a relatively small subset of the original design, thus requiring computation on only a modest subset of the rows and columns of the large ($N\times N$) covariance matrix. 
This process is then repeated across predictive locations of interest, ideally largely in parallel.
The determination of this local subset for each location of interest is crucial since it greatly impacts the accuracy of the corresponding local GP emulator. \cite{gramacy2015local} proposed a greedy search to sequentially augment the subset according to an appropriate criteria and that approach yields reasonably accurate GP emulators. More details are presented in Section \ref{sec:localgaussian}. 

A bottleneck in this approach, however, is that a complete iterative search for the augmenting point requires looping over $O(N)$ data points at each iteration.
% arguments of subset is quite expensive. 
In Section \ref{sec:reducedsearch}, motivated by the intuition that there is little potential benefit in including a data point far from the prediction location, two new neighborhood search limiting techniques are proposed, the \textit{maximum distance method} and the \textit{feature approximation method}.
Two examples in Section \ref{sec:example} show that the proposed methods substantially speed up the local GP approach while retaining its accuracy. 
A brief discussion follows in Section \ref{sec:conclusion}. 
Mathematical proofs 
%of some results in this paper 
are provided in the Appendix. 

%In fact, the essential ideas of our proposed techniques can be applied to other sequential schemes that aim at optimization of some objective function that is monotone in a distance measure between an argument point and existing data points or some other known values. For example, \cite{jones1998efficient}'s expected improvement for global optimization, \cite{cohn1996active}'s active learning to select training data, and \cite{joseph2015sequential}'s sequential exploration of complex surfaces, all satisfy the condition.

\section{Preliminaries}\label{sec:localgaussian}

%{\bf [This section should be very short, likely all introductory material/background combining in a single section (say Background or Preliminaries). No need to do more than mention specific correlation functions.]}
\subsection{Gaussian Process Model}
A Gaussian process (GP) is a stochastic process whose finite dimensional distributions are defined via a mean function $\mu(x)$ and a covariance function $\Sigma(x,x')$, for $d$-dimensional inputs $x$ and $x'$. 
%In a typical Gaussian process, 
In particular,
for $N$ input $x$-values, say $X_N$, which define the $N$-vector $\mu(X_N)$ and $N\times N$ matrix $\Sigma(X_N,X_N)$, and a corresponding $N$-vector of responses $Y_N$, the responses have distribution $Y_N\sim \mathcal{N}(\mu(X_N),\Sigma(X_N,X_N))$. The scale $\sigma^2>0$ is commonly separated from the process correlation function, $Y_N\sim \mathcal{N}(\mu(X_N),\sigma^2\Phi(X_N,X_N))$, where the $N\times N$ matrix $\Phi(X_N,X_N)=(\Phi(x_i,x_j))$ is defined in terms of a correlation function $\Phi(\cdot,\cdot)$, with $\Phi(x,x)=1$. As an example, consider the often-used separable \textit{Gaussian correlation function} 
\begin{equation}\label{eq:guassiancorrelationfunction}
	\Phi_{\Theta}(x,x')=\exp\left\{-\sum^d_{j=1}(x_j-x'_j)^2/\theta_j\right\},
	\mbox{where}\ 
%\end{equation}
%\begin{displaymath}
	\Theta=(\theta_1,\ldots,\theta_d),\ \theta_j>0, j=1,\ldots,d.
%\end{displaymath}
\end{equation}
Observe that correlation decays exponentially fast in the squared distance between $x_j$ and $x'_j$ at rate $\theta_j$. With this choice, the sample paths are very smooth (infinitely differentiable) and the resulting predictor is an interpolator.

The GP model is popular because inference for $\mu(\cdot),\sigma^2$ and $\Theta$ is easy and prediction is highly accurate. %and can be expressed in closed form. 
A popular inferential choice is maximum likelihood, with corresponding log likelihood (up to an additive constant) 
\begin{gather}
\begin{split}
\ell(\mu,\sigma^2,\Theta)=&-\frac{1}{2}\left\{n\log(\sigma^2)+\log(\det (\Phi_\Theta(X_N,X_N)))+\right.\\
&\qquad\quad\left.(Y_N-\mu(X_N))^T\Phi_\Theta(X_N,X_N)^{-1}(Y_N-\mu(X_N))/\sigma^2\right\}\nonumber
\end{split}
\end{gather}
and the MLEs of $\mu(\cdot),\sigma^2$ and $\Theta$ are 
\begin{equation}\label{eq:mle}
(\hat{\mu}(\cdot),\hat{\sigma}^2,\hat{\Theta})=\arg\max_{\mu,\sigma^2,\Theta}\ell(\mu,\sigma^2,\Theta).
\end{equation}
Here, $\mu(\cdot)$ and its estimate are described somewhat vaguely. Common choices are ${\mu}(\cdot)\equiv 0$, ${\mu}(\cdot)=\mu$, or ${\mu}(\cdot)=h(\cdot)^T\beta$, for a vector of relatively simple basis functions $h(\cdot)$.
More details on inference can be found in \cite{fang2005design} or \cite{santner2013design}.
%Maximum likelihood EBLUPs and posterior mode EBLUPs \citep{fang2005design,santner2013design} 
%are useful approaches for estimating the unknown parameters $\mu(\cdot)$ and $\theta_j$. 
%{\bf Bobby: I think we should say what an EBLUP is.  I already added some text in the intro
%about derivatives of the likelihood, but it might make sense to write down the likelihood.}
Importantly, the predictive distribution of $Y(x)$ at a new setting $x$ can be derived for fixed parameters by properties of the conditional multivariate normal distribution. In particular, it can be shown that $Y(x)|X_N,Y_N\sim \mathcal{N}(\mu_N(x),V_N(x))$, where
\begin{equation}\label{emulator}
	\mu_N(x)={\mu}(x)+\Phi_{{\Theta}}(x,X_N)\Phi_{{\Theta}}(X_N,X_N)^{-1}(Y_N-{\mu}(X_N)),
\end{equation}
\begin{equation}\label{varfun}
	V_N(x)=\sigma^2(\Phi_{{\Theta}}(x,x)-\Phi_{{\Theta}}(x,X_N)\Phi_{{\Theta}}(X_N,X_N)^{-1}\Phi_{{\Theta}}(X_N,x)).
\end{equation}
%and $\hat{\mu}$, $\hat{\sigma}^2$, and $\hat{\Theta}$ are estimates of $\mu$, $\sigma^2$, and $\Theta$, respectively. 
%For simplicity, the subscript $\hat{\Theta}$ is replaced by $\Theta$ in the context. 
In a practical context, the parameters $\mu(\cdot)$, $\sigma^2$, and $\Theta$ can be replaced by their estimates \eqref{eq:mle} and it might be argued that the corresponding predictive distribution is better approximated by a $t$-distribution than normal
(see 4.1.3 in \cite{santner2013design}). 
%\citep{gramacy2012particle,gramacy2015local}.
%In computer experiments, 
Either way,
$\hat\mu_N(x)$ is commonly taken as the \textit{emulator}, and $V_N(x)$ captures uncertainty.
%\textbf{[Chih-Li: Should we define $\hat\mu_N(x)$?]}

\subsection{Local Gaussian Process Approximation}\label{sec:localGPreview}
A major difficulty in computing the emulator \eqref{emulator} and its predictive variance \eqref{varfun} is solving the linear system $\Phi_{{\hat{\Theta}}}(X_N,X_N)y=\Phi_{\hat{\Theta}}(X_N,x)$, since it requires $O(N^2)$ storage and around $O(N^3)$ computation for dense matrices. 
%Even for a computer experiment with $N=10^4$ input-output pairs, the emulator cannot be implemented without supercomputing resources. 
A promising approach is to search small sub-designs that approximate GP prediction and inference from the original design \citep{gramacy2015local}. The idea of the method is to focus on prediction at a particular generic location, $x$, using a subset of the full data $X_n(x)\subseteq X_N$. Intuitively, the sub-design $X_n(x)$ may be expected to be comprised of $X_N$ close to $x$. 
For typical choices of $\Phi_\Theta(x,x')$, correlation between elements $x$, $x'$ in the input space decays quickly for $x'$ far from $x$, and $x'$'s which are far from $x$ have vanishingly small influence on prediction. Ignoring them in order to work with much smaller, $n\times n$ matrices brings big computational savings, ideally with little impact on accuracy. Figure \ref{fig:localkriging} displays a  smaller sub-design ($n=7$) near location $x=0.5$ extracted from the original design ($N=21$). Although the emulator (red dashed line) performs very poorly from 0 to 0.3 and from 0.6 to 1.0, the sub-design provides accurate and robust prediction at $x=0.5$.

\begin{figure}[h]
\centering
\includegraphics[width=0.6\textwidth]{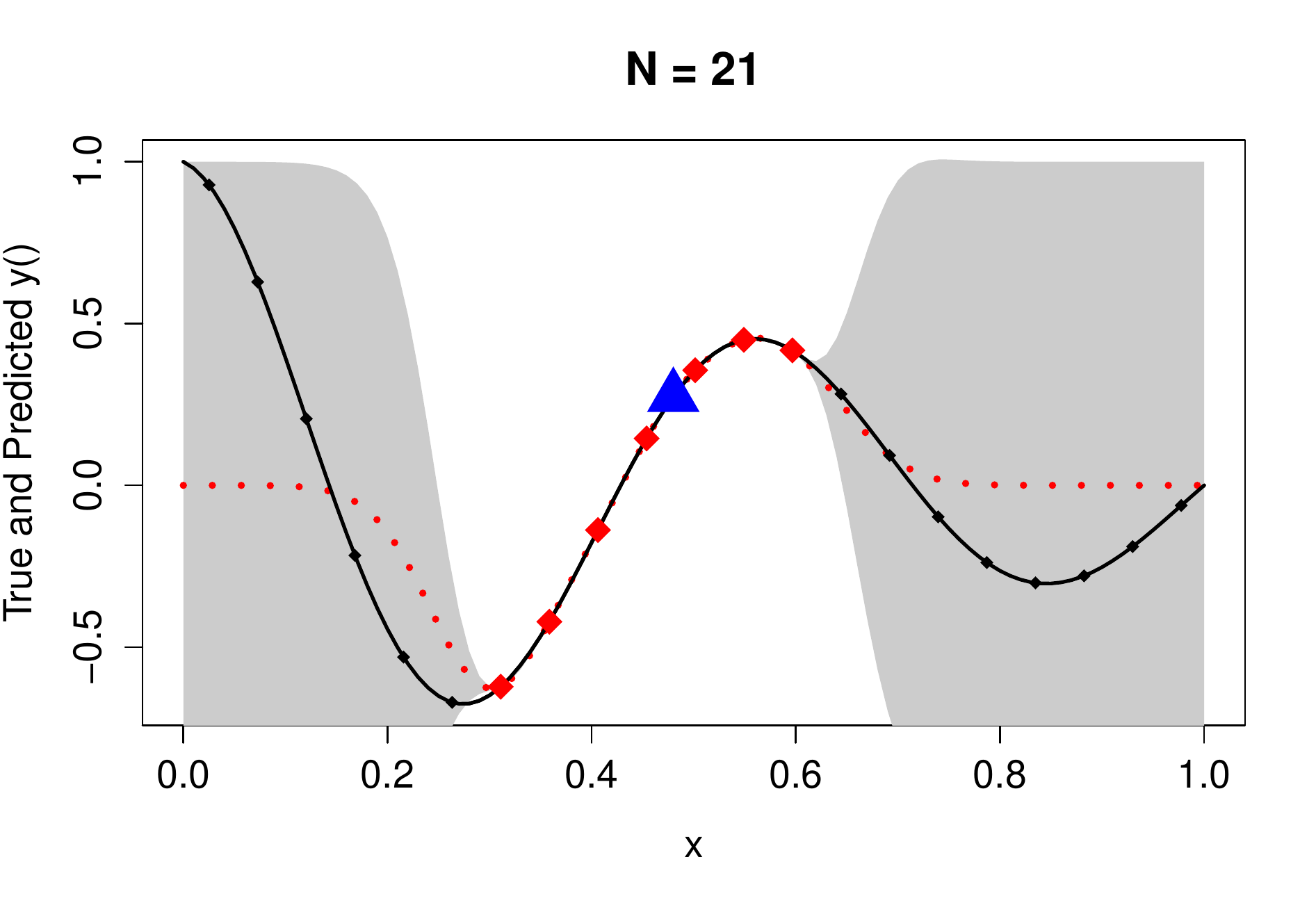}
\caption{An example sub-design $X_7(x)$ for a one dimensional input. Black dots represent the full design, $X_{21}$, the blue triangle represents the point of interest $x=0.5$ and the red diamonds represent the sub-design, $X_7(x)$. Based on the sub-design $X_7(x)$, the emulator is represented as the red dotted line, with the gray shaded region providing a pointwise 95\% confidence band.}
\label{fig:localkriging}
\end{figure}

For an accurate and robust emulator, a smaller predictive variance \eqref{varfun} for each $x$ is desirable. 
We seek a small sub-design $X_n(x)\subseteq X_N$ for each location of interest $x$, which minimizes the predictive variance \eqref{varfun} corresponding to the sub-design $X_n(x)$. This procedure is then repeated for each location of interest $x$. 
The identification of sub-designs and subsequent prediction at each such $x$ can be parallelized immediately, providing a substantial leap in computational scalability.
%In fact, the computational issue of the repeat procedure can be solved by the technique of parallel computing. 
However, searching for the optimal sub-design, which involves choosing $n$ from $N$ input sites, is a combinatorially huge undertaking. 
A sensible idea is to build up $X_n(x)$ by \textit{n nearest neighbors} (NNs) close to $x$ and the result is a valid probability model for $Y(x)|X_n(x),Y(X_n(x))$ \citep{datta:etal:2015}. 
\cite{gramacy2015local} proposed a greedy, iterative search for the sub-design, starting from a small NN set $X_{n_0}$ and sequentially choosing the $x_{j+1}$ which provides the greatest reduction in predictive variance to augment $X_j(x)$, for $j=n_0,n_0+1,\ldots,n$. That is, 
%finding the $x_{j+1}$ such that
\begin{equation}\label{eq:minimunvariance}
x_{j+1}=\underset{\begin{subarray}{c}u\in X_N\setminus X_j(x),\\X_{j+1}=X_j(x)\cup u
\end{subarray}}{\arg\min}V_{j+1}(x)
\end{equation}
and $X_{j+1}(x)=X_j(x)\cup x_{j+1}$.
%for $j=n_0,n_0+1,\ldots,n$ and building up $X_n(x)$ by argmenting $x_{j+1}$ from $X_j(x)$.
Both the greedy and NN 
%{\bf [BH - Is this true for kd tree implemenation of NN? Should greedy have order $O(n^3N)$ and NN have order $O(n^3+n\log N)$ with kd tree?]} 
schemes can be shown to have computational order $O(n^3)$ (for fixed $N$) when the scheme is efficiently deployed for each update $j\rightarrow j+1$. 
Specifically, the matrix inverse $\Phi_{\Theta}(X_{j+1},X_{j+1})^{-1}$ in $V_{j+1}(x)$ can be updated efficiently using partitioned inverse equations \citep{harville1997matrix}. 
Before the greedy subsample selection proceeds, correlation parameters can be 
%when estimating parameters, the greedy scheme is 
initialized to reasonable fixed values 
%of $\Theta$ 
to be used throughout the sub-design search iterations.  
%At the end of search, 
After a sub-design has been selected for a particular location, 
a local MLE can be constructed.
%calculation commenses on the final design $X_n(x)$. 
Thus, only $O(n^3)$ cost is incurred for building the local subset and subsequent local parameter estimation.  For details and implementation, see 
the \texttt{laGP} package for \textsf{R} \citep{lagpjss}. 
An initial \emph{overall} estimate of the correlation parameters can be obtained using 
%To obtain initial $\Theta$ values in a separable Gaussian correlation structure, we recommentd 
the Latin hypercube design-based block bootstrap subsampling scheme proposed by \cite{liu:hung:2015}, which has been shown to
consistently estimate \emph{overall} lengthscale $\theta_j$-values in a computationally tractable way, even with large $N$.

%However, in t
The greedy scheme, searching for the next design point in $X_N\setminus X_j(x)$ to minimize the predictive variance \eqref{eq:minimunvariance}, is still
%sub-design from candidates satisfying $X_N\setminus X_j(x)$ in \eqref{eq:minimunvariance} is 
computationally expensive, especially when the design size $N$ is very large. 
For example, the new $x_{j+1}$  based on \eqref{eq:minimunvariance} involves searching over $N-j$ candidates. 
In that case, the greedy search method  
still contains a serious computational bottleneck
%somehow unable to save much computation 
in spite of its improvements 
relative to solving the linear system in \eqref{emulator} for GP prediction and inference. \cite{gramacy2014massively} recognized this issue and accelerated the search by exporting computation to graphical processing units (GPUs). They showed that the GPU scheme with local GP approximation and massive parallelization can lead to an accurate GP emulator for a one million run full design, with the GPUs providing approximately an order of magnitude speed increase. 
\cite{gramacy2015speeding} noticed that the progression of $x_{j+1},j=1,2,\ldots$ \emph{qualitatively} takes on a ribbon and ring pattern in the input space and suggested a computationally efficient heuristic based on one dimensional searches along rays emanating from the predictive location of interest $x$.

In Section \ref{sec:reducedsearch}, two computationally efficient and accuracy preserving neighborhood search methods are proposed.
Both neighborhood searches reduce computation by decreasing the number of candidate design points examined.
%from overall $N-j$ candidates whether via GPU or otherwise. 
It is shown that only locations within a particular distance of either the prediction location $x$ or the current sub-design, or locations in particular regions within a \emph{feature space}, can have {\it substantial} influence on prediction. 
%In that sense, this
Using these techniques, it is possible 
to search a \emph{much} smaller candidate set at each stage, 
%that a much smaller number $N'\ll N-j$ of candidates, say $N'=1000$ and $N=1$M, need to be searched in each stage $j$, 
leading to huge reductions in computation and increases in scalability. 
%These two search methods are presented in greater detail in Section \ref{sec:reducedsearch}.

\section{Reduced Search in Local Gaussian Process}\label{sec:reducedsearch}
As discussed previously, when building a sub-design $X_n(x)$ for prediction at location $x$, there is intuitively little potential benefit to considering input locations which are very distant from $x$ (relative to the correlation decay) as the response value at these locations is nearly independent of the response at $x$. 
In Section \ref{sec:method1},
a \textit{maximum distance} bound and corresponding algorithm are provided 
%that provides a much smaller search space for potential new sub-design locations 
and in Section \ref{sec:method2}, a \textit{feature approximation} bound and corresponding modification to the algorithm are provided. 
The algorithms
%, in particular when used in concert, 
furnish a dramatically 
reduced set of potential design locations which need to be examined, in a computationally efficient and scalable manner.

\subsection{Maximum Distance Method}\label{sec:method1}
Following the notation from Section \ref{sec:localgaussian}, $x$ is the particular location of interest, in terms of emulation/prediction, and $X_j(x)$ is the greedy sub-design at stage $j$. 
To augment the sub-design $X_j(x)$, the locations which are distant from $x$ should intuitively have little potential to reduce the predictive variance at $x$.  Therefore their consideration as potential $x_{j+1}$ values
is unnecessary.
%, yielding $X_{j+1}(x)$, brings little benefit.
%The reason is that the correlation based on correlation functions in computer experiments, such as \eqref{powerexpcorr} and \eqref{matern}, decays in the distance between the locations and $x$, that is, the response values at these locations are nearly independent of the response at $x$. 
This intuition is correct and developed as follows.

First, assume that the underlying correlation function is radially decreasing after appropriate linear transformation of the inputs. That is, assume there is a strictly decreasing function $\phi$ so that $\Phi_\Theta(x,x')=\phi(\|\Theta(x-x')\|_2)$ for some $\Theta$. 
In practice, $\Theta$ can be estimated using the local MLE as discussed in Section \ref{sec:localGPreview}, using as a starting value the \emph{overall}, consistent estimate from the sub-design search iterations. Now, consider a candidate input location $x_{j+1}$ at stage $j+1$ of the greedy sub-design search for an input location to add to the design and define $d_{\min}(x_{j+1})$ as the minimum (Mahalanobis-like) distance between the candidate point $x_{j+1}$ and the current design and location of interest, that is, 
\begin{align}\label{mindisdef}
\nonumber
d_{\min}(x_{j+1})=\min\{\|\Theta(x-&x_{j+1})\|_2,\|\Theta(x_1-x_{j+1})\|_2,\\
&\|\Theta(x_2-x_{j+1})\|_2,\ldots,\|\Theta(x_j-x_{j+1})\|_2\}.
\end{align}
For example, consider the sub-design $X_j(x)$ with two dimensional inputs shown in Figure \ref{mindistance} for $j=8$. The location of interest is marked with a circled $\times$ and the current sub-design $X_j(x)$ is indicated with gray dots. With $\Theta={\rm diag}(1/\sqrt{3},1/\sqrt{3})$, the candidate points $x_{j+1}$ with $d_{\min}(x_{j+1})$ less than $3.07$ lie within the yellow shaded region.

\begin{figure}[h]
\centering
\includegraphics[width=0.5\textwidth]{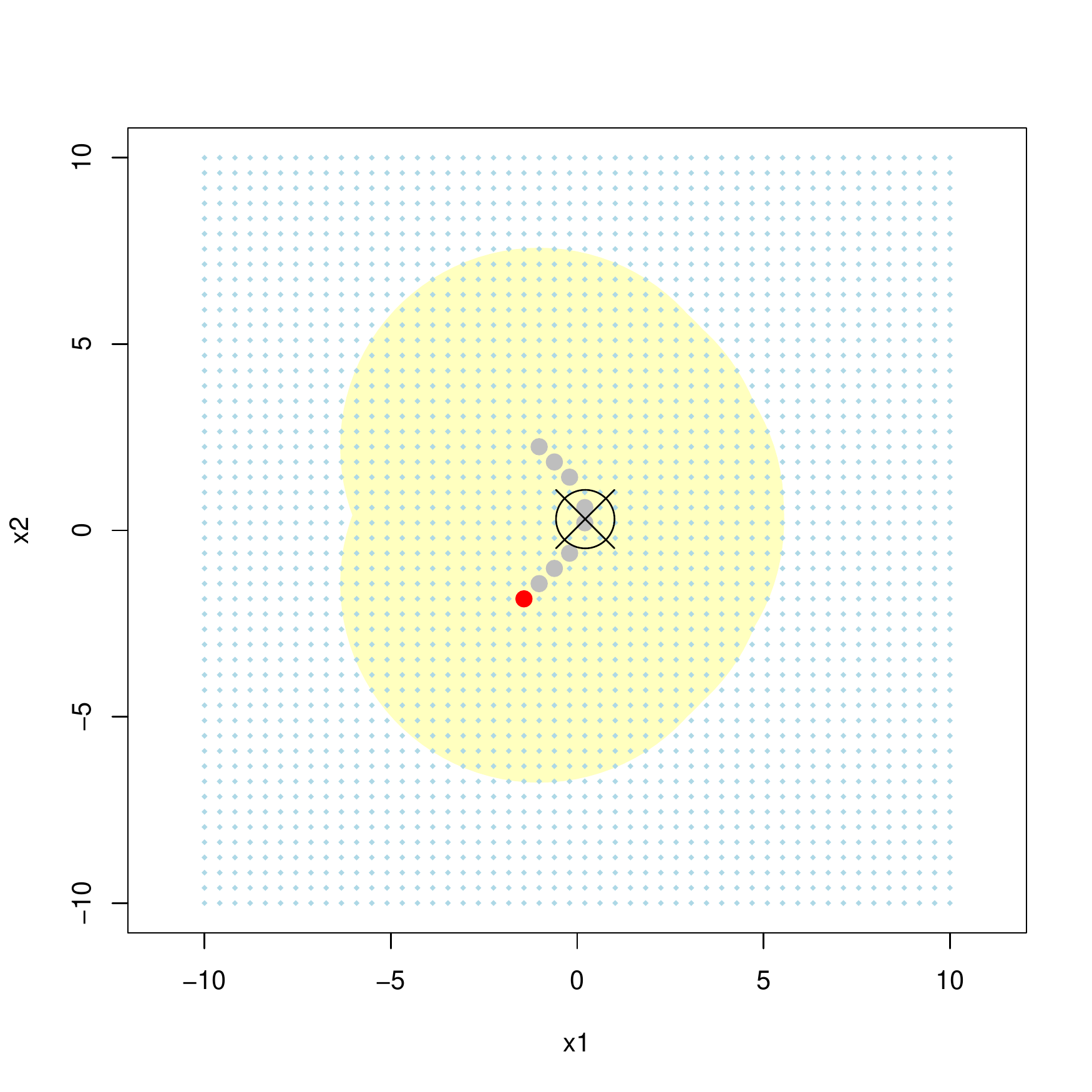}
\caption{An example sub-design $X_{8}(x)$ with two dimensional inputs. The circled $\times$ represents the location of interest. With $\Theta={\rm diag}(1/\sqrt{3},1/\sqrt{3})$, the gray dots represent current design points $X_{8}(x)$, the red dot represents the new input location $x_{9}$, and the yellow shaded region represents the candidate points $x_*$ with $d_{\min}(x_*)<3.07$.}
\label{mindistance}
\end{figure}

Based on the local design scheme introduced in Section \ref{sec:localgaussian} and equation \eqref{eq:minimunvariance}, the sub-design $X_n(x)$ is built up through the choices of $x_{j+1}$ to sequentially augment $X_j(x)$, at each stage aiming to minimize predictive variance. Proposition \ref{prop1} provides an alternate formula for this variance, which will be used to greatly reduce the number of candidates in the minimization problem. Its proof is provided in Appendix \ref{proofreducevariance}.

\begin{prop}\label{prop1}
The predictive variance $V_j(x)$ in \eqref{varfun} can be represented via the recurrence 
\begin{equation}\label{eq:recurrent}
V_{j+1}(x)=V_j(x)-\sigma^2R(x_{j+1}).
\end{equation}
Here, $R(x_{j+1})$ represents the (scaled) \textit{reduction in variance}. In particular,
\begin{equation}\label{reducevariance}
R(x_{j+1})=\frac{(\Phi_\Theta(x,x_{j+1})-\Phi_\Theta(x_{j+1},X_j)\Phi_\Theta(X_j,X_j)^{-1}\Phi_\Theta(X_j,x))^2}{\Phi_\Theta(x_{j+1},x_{j+1})-\Phi_\Theta(x_{j+1},X_j)\Phi_\Theta(X_j,X_j)^{-1}\Phi_\Theta(X_j,x_{j+1})}.
\end{equation}
\end{prop}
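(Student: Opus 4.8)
The plan is to derive the recurrence directly from the closed-form predictive variance \eqref{varfun}, by expressing the inverse of the augmented correlation matrix through the partitioned (block) inverse equations \citep{harville1997matrix}. Writing $X_{j+1}=X_j\cup x_{j+1}$, the matrix $\Phi_\Theta(X_{j+1},X_{j+1})$ has block form with leading block $\Phi_\Theta(X_j,X_j)$, off-diagonal blocks $\Phi_\Theta(X_j,x_{j+1})$ and its transpose, and trailing scalar $\Phi_\Theta(x_{j+1},x_{j+1})$. First I would record the Schur complement $s=\Phi_\Theta(x_{j+1},x_{j+1})-\Phi_\Theta(x_{j+1},X_j)\Phi_\Theta(X_j,X_j)^{-1}\Phi_\Theta(X_j,x_{j+1})$, which is exactly the denominator of \eqref{reducevariance}, and note that $s>0$ whenever the correlation matrix is positive definite, so the quantity is well defined.

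Next I would substitute the block inverse into the quadratic form $\Phi_\Theta(x,X_{j+1})\Phi_\Theta(X_{j+1},X_{j+1})^{-1}\Phi_\Theta(X_{j+1},x)$ appearing in $V_{j+1}(x)$. Partitioning the cross-covariance vector as $\Phi_\Theta(x,X_{j+1})=(\Phi_\Theta(x,X_j),\Phi_\Theta(x,x_{j+1}))$ and expanding the four resulting blocks, the contribution using only $\Phi_\Theta(X_j,X_j)^{-1}$ reproduces $\Phi_\Theta(x,X_j)\Phi_\Theta(X_j,X_j)^{-1}\Phi_\Theta(X_j,x)$, precisely the quantity already present in $V_j(x)$. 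All remaining contributions carry the factor $s^{-1}$.

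The crux of the argument is then to check that these remaining $s^{-1}$ terms collapse to a single perfect square. Writing $b=\Phi_\Theta(x,X_j)\Phi_\Theta(X_j,X_j)^{-1}\Phi_\Theta(X_j,x_{j+1})$ and $\gamma=\Phi_\Theta(x,x_{j+1})$, the two off-diagonal cross terms and the trailing term assemble as $s^{-1}(b^2-2\gamma b+\gamma^2)=s^{-1}(\gamma-b)^2$, a completed square. Recognizing this is the main obstacle; it is not deep, but it requires keeping the bookkeeping of row/column vectors straight and using symmetry of $\Phi_\Theta$ to see that the two cross terms are equal scalars.

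Finally I would identify $\gamma-b$ with the numerator of \eqref{reducevariance}: since $b$ is a scalar it equals its own transpose $\Phi_\Theta(x_{j+1},X_j)\Phi_\Theta(X_j,X_j)^{-1}\Phi_\Theta(X_j,x)$, so $\gamma-b=\Phi_\Theta(x,x_{j+1})-\Phi_\Theta(x_{j+1},X_j)\Phi_\Theta(X_j,X_j)^{-1}\Phi_\Theta(X_j,x)$. Substituting back gives $V_{j+1}(x)=\sigma^2(\Phi_\Theta(x,x)-\Phi_\Theta(x,X_j)\Phi_\Theta(X_j,X_j)^{-1}\Phi_\Theta(X_j,x))-\sigma^2 s^{-1}(\gamma-b)^2=V_j(x)-\sigma^2 R(x_{j+1})$, which is the claimed recurrence \eqref{eq:recurrent} with $R(x_{j+1})$ as in \eqref{reducevariance}.
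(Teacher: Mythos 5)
Your proposal is correct and follows essentially the same route as the paper's own proof: both apply the partitioned (Schur complement) inverse equations of \cite{harville1997matrix} to the augmented correlation matrix, identify the Schur complement as the denominator of $R(x_{j+1})$, and collapse the remaining terms into the perfect square $(\gamma-b)^2$ that forms its numerator. The only cosmetic differences are that you expand the four blocks and complete the square by hand (with $X_j$ as the leading block) whereas the paper invokes the quadratic-form identity directly (with $x_{j+1}$ as the leading block); the substance is identical.
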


%\begin{proof}
%See Appendix \ref{proofreducevariance}.
%\end{proof}

The recurrence relation \eqref{eq:recurrent} is useful for searching candidates to entertain. Further, minimizing variance after adding the new input location $x_{j+1}$ is equivalent to maximizing reduction in variance $R(x_{j+1})$. 

The following theorem allows one to narrow the window of candidate locations to consider when searching greedily for a local design. The proof is provided in Appendix \ref{proofthm1}.

\begin{theorem}\label{theorem}
Suppose $\Phi:\Omega\times\Omega\rightarrow\mathbb{R}$ is a symmetric positive-definite kernel on a compact set $\Omega\subseteq \mathbb{R}^d$ and there exists a strictly decreasing function $\phi:\mathbb{R}^+\rightarrow\mathbb{R}$ such that $\Phi_\Theta(x,y)=\phi(\|\Theta(x-y)\|_2)$ for some $\Theta$. 
Then, for $\delta>0$, $R(x_{j+1})\leq \delta$
if 
\begin{equation}\label{inequation}
d_{\min}(x_{j+1})\geq\phi^{-1}\left(\sqrt{\frac{\delta}{(1+\sqrt{j}\|\Phi_\Theta(X_j,X_j)^{-1}\Phi_\Theta(X_j,x)\|_2)^2+j\delta/\lambda_{\min}}}\;\right),
\end{equation}
where $\lambda_{\min}$ is the minimum eigenvalue of $\Phi_\Theta(X_j,X_j)$. 
\end{theorem}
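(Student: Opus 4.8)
The plan is to treat $R(x_{j+1})$ from \eqref{reducevariance} as a squared numerator over a Schur-complement denominator, bound each piece separately in terms of the single scalar $\rho := \phi(d_{\min}(x_{j+1}))$, and then solve the resulting inequality for the distance. To lighten notation I would write $K = \Phi_\Theta(X_j,X_j)$, $a = \Phi_\Theta(X_j,x)$, $b = \Phi_\Theta(X_j,x_{j+1})$, and $c = \Phi_\Theta(x,x_{j+1})$, and use $\Phi_\Theta(x_{j+1},x_{j+1})=\phi(0)=1$, so that $R(x_{j+1}) = (c - b^T K^{-1} a)^2/(1 - b^T K^{-1} b)$.

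The conceptual crux is that, because $d_{\min}(x_{j+1})$ is the \emph{minimum} over the transformed distances from $x_{j+1}$ to $x$ and to every current design point in \eqref{mindisdef}, and because $\phi$ is strictly decreasing, every relevant correlation is dominated by $\rho$: the inequality $\|\Theta(x - x_{j+1})\|_2 \geq d_{\min}(x_{j+1})$ gives $c \leq \phi(d_{\min}(x_{j+1})) = \rho$, and likewise each entry satisfies $b_i = \phi(\|\Theta(x_i - x_{j+1})\|_2) \leq \rho$, whence $\|b\|_2 \leq \sqrt{j}\,\rho$. (Here I use that $\phi$ is nonnegative over the operative range, as for the Gaussian correlation \eqref{eq:guassiancorrelationfunction}.)

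Granting this, I would bound the numerator from above and the denominator from below. For the numerator, the triangle inequality and Cauchy--Schwarz give $|c - b^T K^{-1} a| \leq c + \|b\|_2\,\|K^{-1}a\|_2 \leq \rho\,(1 + \sqrt{j}\,\|K^{-1}a\|_2)$, where $\|K^{-1}a\|_2$ is exactly $\|\Phi_\Theta(X_j,X_j)^{-1}\Phi_\Theta(X_j,x)\|_2$. For the denominator, $K \succeq \lambda_{\min} I$ forces $K^{-1} \preceq \lambda_{\min}^{-1} I$, so $b^T K^{-1} b \leq \|b\|_2^2/\lambda_{\min} \leq j\rho^2/\lambda_{\min}$ and hence $1 - b^T K^{-1} b \geq 1 - j\rho^2/\lambda_{\min}$. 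Writing $A = (1 + \sqrt{j}\|K^{-1}a\|_2)^2$ and $B = j/\lambda_{\min}$, combining the two estimates yields $R(x_{j+1}) \leq \rho^2 A/(1 - B\rho^2)$ whenever the denominator is positive.

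To close, I would run the hypothesis through $\phi$: applying the strictly decreasing $\phi$ to $d_{\min}(x_{j+1}) \geq \phi^{-1}(\sqrt{\delta/(A + \delta B)})$ gives $\rho \leq \sqrt{\delta/(A + \delta B)}$, i.e.\ $\rho^2(A + \delta B) \leq \delta$, which rearranges to $\rho^2 A \leq \delta(1 - B\rho^2)$; since $A>0$ this same inequality forces $B\rho^2 < 1$, so the denominator is indeed positive and dividing returns $R(x_{j+1}) \leq \rho^2 A/(1 - B\rho^2) \leq \delta$. I expect the main (though modest) obstacle to be bookkeeping the direction of the implication together with the positivity of $1 - B\rho^2$ — namely checking that the hypothesized distance bound simultaneously certifies the denominator is positive and makes the final algebraic rearrangement reversible; the correlation-domination step is the heart of the argument but is immediate once one exploits the minimum in \eqref{mindisdef}.
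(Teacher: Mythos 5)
Your proof is correct and takes essentially the same route as the paper's: bound the numerator by the triangle inequality plus Cauchy--Schwarz, bound the denominator via $\lambda_{\min}$, dominate every correlation involving $x_{j+1}$ by $\phi(d_{\min}(x_{j+1}))$ using the minimum in \eqref{mindisdef}, and then rearrange the resulting rational bound into the distance condition. If anything, your write-up is slightly more careful than the paper's, since you explicitly check that the hypothesis forces $1 - j\phi^2(d_{\min}(x_{j+1}))/\lambda_{\min} > 0$ (so the final division is legitimate), a positivity step the paper's ``or equivalently'' rearrangement leaves implicit.
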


This result indicates that candidate locations which are sufficiently distant from both the location of interest and the current sub-design do not have potential to reduce the variance more than $\delta$. Importantly, if the full set of design locations $X_N$ is stored in a data structure such as a {\em k-d tree} \citep{bentley1975multidimensional}, then the set of candidate locations which do not satisfy inequality \eqref{inequation} can be identified in $O(\log N)$ time, with constant depending on $\delta$, dimension of the input space, and stage $j$, which provides a computationally efficient and readily scalable 
technique for reducing the set of
potential candidate locations.

Theorem \ref{theorem} suggests Algorithm \ref{alg:algorithm1} as a starting point for efficiently selecting sub-designs for prediction at location $x$. In the algorithm, a larger value of $\delta$ is desirable since larger $\delta$ leads to fewer candidate design locations to search. One way to obtain a relatively large value of $\delta$ is to examine the variance reductions on the set of $k$ nearest neighbors which are not yet in the sub-design, which is shown in Step \ref{alg1:step2}. 
The number of nearest neighbors $k$ is a tuning parameter. A larger value of $k$ will provide a larger variance reduction and therefore exclude more candidate design locations, albeit at an additional computational expense since the variance reduction must be checked at each of these locations.
Alternatively, a large value of $\delta$ could be obtained by applying the heuristic proposed in \cite{gramacy2015speeding}.  From the result of Theorem \ref{theorem}, $T(X_j)$ in Step \ref{alg1:step3}, which indicates the region such that 
\begin{gather}
d_{\min}(x_{j+1})\leq\phi^{-1}\left(\sqrt{\frac{\delta}{(1+\sqrt{j}\|\Phi_\Theta(X_j,X_j)^{-1}\Phi_\Theta(X_j,x)\|_2)^2+j\delta/\lambda_{\min}}}\;\right),\label{y}
\end{gather}
gives the subset of candidate locations that have potential to reduce the variance more than $\delta$. 

For each update $j\rightarrow j+1$, the algorithm involves $O(j^2+j\log N)$ computation in Step \ref{alg1:step3}, $O(j\log N)$ for eliminating search locations and $O(j^2)$ for computing
the right-hand side of (\ref{y}), 
%$\phi^{-1}\left(\sqrt{\frac{\delta}{(1+\sqrt{j}\|\Phi_\Theta(X_j,X_j)^{-1}\Phi_\Theta(X_j,x)\|_2)^2+j\delta/\lambda_{\min}}}\;\right)$, 
the maximum distance from the current design and location of interest.
%, i.e., $y=\phi^{-1}\left(\sqrt{\frac{\delta}{(1+\sqrt{j}\|\Phi_\Theta(X_j,X_j)^{-1}\Phi_\Theta(X_j,x)\|_2)^2+j\delta/\lambda_{\min}}}\;\right)$. 
%The detail for computing $y$ is as follows. 
In particular, the matrix inverse $\Phi_{\Theta}(X_j,X_j)^{-1}$ can be updated via the partitioned inverse equations \citep{harville1997matrix} with $O(j^2)$ cost at each iteration. Analysis of the computational complexity of obtaining (an approximation to) the minimum eigenvalue of $\Phi_\Theta(X_j,X_j)$ is more challenging.
It is convenient to work with the reciprocal of the maximum eigenvalue of 
$\Phi_\Theta(X_j,X_j)^{-1}$, for which relatively efficient algorithms such as the power or Lanczos method exist
%, or more efficiently the Lanczos method, 
\citep{golub1996matrix}.
% exist.
If the starting vector is not orthogonal to the target eigenvector, then convergence of the (less efficient, but easier to analyze) power method is geometric with rate depending on the ratio between the two largest eigenvalues of $\Phi_\Theta(X_j,X_j)^{-1}$ (see equation 9.1.5 in \cite{golub1996matrix}). 
While this rate and the constants in front are not fixed across $j$, they can be bounded, with the exception of the influence of the starting vector, across all subsets of the full dataset. The starting vector might be expected to be increasingly collinear with the target eigenvector as $j$ increases, thereby improving the rate bound. 
All together this implies an approximately constant number of iterations, each costing $O(j^2)$, is required to approximate $\lambda_{\rm min}$ for each $j$.
Another perspective would be to choose a random starting vector, for which \cite{kuczynski1992estimating} provide respective average and probabilistic bounds of $O(j^2\log j)$ for the power method and $O(j^2\log^2 j)$ for the Lanczos method.
The inverse function $\phi^{-1}:\mathbb{R}\rightarrow\mathbb{R}$ can be computed in roughly constant time by a root-finding algorithm or even computed exactly for many choices of $\Phi$.
%, so and the computation is less than $O(j^3)$. 
For example, consider the power correlation function, i.e., $\Phi_{\Theta}(x,y)=\exp\{-\|\Theta(x-y)\|_2^p \}$, the $\phi$ can be formed as $\phi(u)=\exp\{-u^p\}$, so $\phi^{-1}(v)=(-\log v)^{1/p}$.
%{\bf [RBG - $\Phi_\Theta^{-1}(X_j,X_j)$ can be updated from $\Phi_\Theta^{-1}(X_{j-1},X_{j-1})$ in $O(j^2)$ time using
%the Bartlett partition inverse equations, as described in \citet{gramacy2015local}.  Therefore I think the component from
%computing $y$ should be $O(j^2)$.  Am I missing something?]}
Note that when a large $n$ is required, computation of $\lambda_{\min}$ might be numerically unstable. 
A remedy in that case may be to stop the search when $\lambda_{\min}$ falls below a prespecified threshold or 
perhaps introduce a penalty inversely proportional to $\lambda_{\min}$.
%An obvious solution for this is setting a prespecified threshold, and then stopping the algorithm if $\lambda_{\min}$ falls below the threshold.

\begin{algorithm}[htb]
  \caption{ Maximum distance search method in local Gaussian process.}
  \label{alg:algorithm1}
  \begin{algorithmic}[1]
          \State Set $j=1$ and $x_1$ as the point closest to the predictive location $x$. Throughout, let $X_j(x) \equiv X_j=\{x_1,x_2,\ldots,x_j\}$, dropping the explicit $(x)$ argument.
          \State Let $N_{jk}(x)$ denote the $k$ nearest neighbors to $x$ in $X_N\setminus X_j$, the candidate locations not currently in the sub-design. Set $\delta_{j+1}$ equal to the maximum variance reduction from $N_{jk}(x)$. That is,
          \begin{equation}\label{threshold}
          \delta_{j+1}=\max\limits_{u \in N_{jk}(x)}{R(u)},
          \end{equation}
          where $R(\cdot)$ is shown in \eqref{reducevariance}.
          \label{alg1:step2}
          \State Set $y=\phi^{-1}\left(\sqrt{\frac{\delta_{j+1}}{(1+\sqrt{j}\|\Phi_\Theta(X_j,X_j)^{-1}\Phi_\Theta(X_j,x)\|_2)^2+j\delta_{j+1}/\lambda_{\min}}}\right)$, where $\Phi_\Theta(x,x')=\phi(\|\Theta(x-x')\|_2)$ and $\lambda_{\min}$ is the minimum eigenvalue of $\Phi_\Theta(X_j,X_j)$.
Let 
          \begin{equation}\label{method1:candidate}
          T(X_j)=\{u\in X_N\setminus X_j:\|\Theta(u-v)\|_2\leq y \text{ for some } v\in \{x,X_j\}\}.
          \end{equation} Then,
          \begin{displaymath}
           x_{j+1}=\arg\max\limits_{u\in T(X_j)}{R(u)}.
          \end{displaymath}
          \label{alg1:step3}
          \State Set $j=j+1$ and repeat \ref{alg1:step2} and \ref{alg1:step3} until either the reduction in variance $R(x_{j+1})$ falls below a prespecified threshold or the local design budget is met.
  \end{algorithmic}
\end{algorithm}

\subsection{Feature Approximation Method}\label{sec:method2}
In addition to the maximum distance method and associated algorithm, an approximation via eigen-decomposition can be applied to reduce the potential locations in a computationally efficient manner. 
%from minimum distance search method. 
%The leading few eigenvalues from eigen-decomposition of a correlation matrix usually can catch most of important information and also give a good approximation to the correlation matrix, which can be seen from a series of applications, such as principle component analysis.
Suppose that $\Phi$ is a symmetric positive-definite kernel on a compact set $\Omega\subseteq \mathbb{R}^d$ and $P:L_2(\Omega)\rightarrow L_2(\Omega)$ is an integral operator, defined by 
\begin{equation}\label{defineT}
Pv(x):=\int_{\Omega}\Phi(x,y)v(y)dy, \quad v\in L_2(\Omega), \quad x\in\Omega.
\end{equation}
Then, Mercer's theorem guarantees the existence of a countable set of positive eigenvalues $\{\lambda_j\}^\infty_{j=1}$ and an orthonormal set $\{\varphi_j\}^\infty_{j=1}$ in $L_2(\Omega)$ consisting of the corresponding eigenfunctions of $P$, that is, $P\varphi_j=\lambda_j\varphi_j$ \citep{wendland2004scattered}. Furthermore, the eigenfunctions $\varphi$'s are continuous on $\Omega$ and $\Phi$ has the absolutely and uniformly convergent representation
\begin{equation*}
\Phi(x,y)=\sum^\infty_{j=1}{\lambda_j\varphi_j(x)\varphi_j(y)}.
\end{equation*}

In particular, $\Phi$ can be approximated uniformly over inputs in terms of a finite set of eigenfunctions
%, that is, 
\begin{equation}\label{approx_eigen_decomp}
\Phi(x,y)\approx\sum^D_{j=1}{\lambda_j\varphi_j(x)\varphi_j(y)}
\end{equation} for some moderately large integer $D$. For some kernel functions, closed form expressions exist. For example, the Gaussian correlation function \eqref{eq:guassiancorrelationfunction} (on $\mathbb{R}^d$, with weighted integral operator) has eigenfunctions given by products of Gaussian correlations and Hermite polynomials \citep{zhu1997gaussian}. More generally, \cite{williams2001using} show high-quality approximations to these eigen-decompositions can be obtained via Nystr{\"o}m's method.

\begin{theorem}\label{featureThm}
Assume $\Phi:\Omega\times\Omega\rightarrow \mathbb{R}$ is a symmetric positive-definite kernel on a compact set $\Omega\subseteq \mathbb{R}^d$ which can be approximated via $D$ eigenfunctions (see equation \eqref{approx_eigen_decomp}). Then, the reduction in variance \eqref{reducevariance} has approximate representation
\begin{equation}\label{reducevariancebyfeature}
	R(x_{j+1})\approx\|C_{X_j}(x)\|^2_2\cos^2(\vartheta),
\end{equation} 
where $\vartheta$ is the angle between $C_{X_j}(x)$ and $C_{X_j}(x_{j+1})$,  
\begin{align}
	C_{X_j}(t)&=[I-U(X_j)[U^T(X_j)U(X_j)]^{-}U^T(X_j)]U(t),\label{defineC}\\
	U(t)&=\left(\sqrt{\lambda_1}\varphi_1(t),\ldots,\sqrt{\lambda_D}\varphi_D(t)\right)^T,\quad{\rm and}\nonumber\\
U(X_j)&=\left[U(x_1),\ldots,U(x_j)\right], \nonumber
\end{align}
 for eigenfunctions $\varphi_i(t)$ and corresponding ordered eigenvalues $\lambda_1\ge\ldots\ge\lambda_D$.
\end{theorem}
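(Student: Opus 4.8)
The plan is to exploit the fact that the truncated Mercer expansion \eqref{approx_eigen_decomp} turns the kernel into an ordinary Euclidean inner product in the $D$-dimensional feature space. With $U(t)=(\sqrt{\lambda_1}\varphi_1(t),\ldots,\sqrt{\lambda_D}\varphi_D(t))^T$ exactly as in \eqref{defineC}, the approximation \eqref{approx_eigen_decomp} reads $\Phi(s,t)\approx U(s)^T U(t)$. Substituting this into every kernel evaluation appearing in \eqref{reducevariance} and collecting the design columns into $U(X_j)=[U(x_1),\ldots,U(x_j)]$, I would first record $\Phi_\Theta(X_j,X_j)\approx U^T(X_j)U(X_j)$, $\Phi_\Theta(X_j,x)\approx U^T(X_j)U(x)$, $\Phi_\Theta(x_{j+1},X_j)\approx U^T(x_{j+1})U(X_j)$, and $\Phi_\Theta(x_{j+1},x_{j+1})\approx \|U(x_{j+1})\|_2^2$. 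This reduces the whole expression for $R(x_{j+1})$ to a ratio of quadratic forms built from $U$, with $\Phi_\Theta(X_j,X_j)^{-1}$ replaced by $[U^T(X_j)U(X_j)]^-$.

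The key step is to recognize $P:=U(X_j)[U^T(X_j)U(X_j)]^- U^T(X_j)$ as the orthogonal projector onto the column span of $U(X_j)$. Using the defining properties of the Moore--Penrose pseudo-inverse one checks $P=P^T=P^2$, so $I-P$ is also a symmetric idempotent projector, and by \eqref{defineC} one has exactly $C_{X_j}(t)=(I-P)U(t)$. With this notation the numerator of \eqref{reducevariance} becomes
\begin{align}
\Phi_\Theta(x,x_{j+1})-\Phi_\Theta(x_{j+1},X_j)\Phi_\Theta(X_j,X_j)^{-1}\Phi_\Theta(X_j,x) &\approx U^T(x_{j+1})(I-P)U(x) \nonumber\\
&= C_{X_j}(x_{j+1})^T C_{X_j}(x), \nonumber
\end{align}
where the last equality uses $(I-P)=(I-P)^2$ and $(I-P)^T=(I-P)$ to split the single projector across both factors. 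The denominator collapses in the same way to $U^T(x_{j+1})(I-P)U(x_{j+1})=\|C_{X_j}(x_{j+1})\|_2^2$.

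Finally I would write $R(x_{j+1})\approx \left(C_{X_j}(x_{j+1})^T C_{X_j}(x)\right)^2/\|C_{X_j}(x_{j+1})\|_2^2$ and insert the cosine identity $C_{X_j}(x_{j+1})^T C_{X_j}(x)=\|C_{X_j}(x_{j+1})\|_2\,\|C_{X_j}(x)\|_2\cos\vartheta$, with $\vartheta$ the angle between the two residual feature vectors. The factor $\|C_{X_j}(x_{j+1})\|_2^2$ cancels against the denominator, leaving $R(x_{j+1})\approx\|C_{X_j}(x)\|_2^2\cos^2\vartheta$, which is \eqref{reducevariancebyfeature}.

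The main obstacle I anticipate is the careful handling of the pseudo-inverse: the clean cancellation above relies on $P$ being a \emph{genuine} orthogonal projector, which the Moore--Penrose properties guarantee but an arbitrary generalized inverse would not. When $U^T(X_j)U(X_j)$ is nonsingular (the generic case for $D\ge j$ with distinct, well-separated design points) this is automatic; the pseudo-inverse notation in \eqref{defineC} is needed only to cover rank-deficient feature matrices, and I would verify the identities $PU(X_j)=U(X_j)$ and $P=P^T=P^2$ directly, e.g.\ via a thin singular value decomposition of $U(X_j)$. The approximation symbol throughout is inherited verbatim from \eqref{approx_eigen_decomp}, so no separate error analysis is required: the result is exact in the idealized feature space and inherits the uniform truncation error of the $D$-term expansion.
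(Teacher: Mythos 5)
Your proof is correct and follows essentially the same route as the paper's: substitute the truncated Mercer expansion into \eqref{reducevariance}, recognize $U(X_j)[U^T(X_j)U(X_j)]^{-}U^T(X_j)$ as a symmetric idempotent projector so that $I-P$ can be split across the two factors in the numerator and denominator, and finish with the cosine identity. One small remark: your worry about needing the Moore--Penrose inverse specifically is unnecessary, since $U(U^TU)^{-}U^T$ is invariant to the choice of generalized inverse and is always the orthogonal projector onto the column space of $U$, which is exactly why the paper's notation $[\,\cdot\,]^{-}$ suffices.
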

\begin{proof}
Provided in Appendix \ref{proof:reducevariancebyfeature}.
\end{proof}

According to this approximation, instead of excluding candidates in Euclidean space as indicated in Theorem \ref{theorem}, the candidate set can be further reduced by transforming the
% reduced set of 
inputs into a feature space. A modified algorithm is suggested as follows. The variance reduction threshold in equation \eqref{threshold} now places a restriction on the \textit{angle} between $C_{X_j}(x)$ and $C_{X_j}(x_{j+1})$, where we would like to exclude points \emph{outside} the cones
\begin{equation}\label{subregion}
\cos^2(\vartheta)\leq\frac{\delta_{j+1}}{\|C_{X_j}(x)\|^2_2}.
\end{equation}

A feature approximation modification to Algorithm \ref{alg:algorithm1} is shown in Algorithm \ref{alg:algorithm2}.
%
%A new candidate set $T^*(X_j)$ can be updated from Step \ref{alg1:step3} in Algorithm \ref{alg:algorithm1},
%\[
%T^*(X_j)=T(X_j)\bigcap\left\{x_{j+1}\in X_N\setminus X_j:\cos^2(\vartheta)\leq\frac{\delta_{j+1}}{\|C_{\Theta,X_j}(x)\|^2_2}\right\}.
%\]
%
%The new algorithm is shown in Algorithm \ref{alg:algorithm2}. 
To reduce the computational burden in checking \eqref{subregion}, the values of the first $D$ eigenfunctions at the full dataset $X_N$, $U(X_N)$, could be computed in advance and stored based on a locality-sensitive hashing (LSH) scheme \citep{indyk1998approximate}. LSH is a method for answering approximate similarity-search queries in high-dimensional spaces. The basic idea is to use special locality-sensitive functions to \textit{hash} points into ``buckets" such that ``nearby" points map to the same bucket with high probability. Many similarity measures have corresponding LSH functions that achieve this property. For instance, the hashing functions for cosine-similarity are the normal vectors of random hyperplanes through the origin, denoted for example as $v_1,\ldots,v_k$. Depending on its side of these random hyperplanes, a point $p$ is placed in bucket $h_1(p),\ldots,h_k(p)$, where $h_i(p)=\text{sign}(v_i^Tp)$. A simple example, following \cite{van2010online}, is provided in Figure \ref{fig:LSH_illustration}. Figure \ref{fig:LSH_illustration_a} illustrates the hashing process for a point $p$, where the point $p$ is hashed into the bucket $(h_1(p),\ldots,h_6(p))=(-1,-1,1,1,1,1)$ by the definition $h_i(p)=\text{sign}(v_i^Tp),i=1,\ldots,6$ (when the point $p$ is above the hyperplane, the inner product is negative, otherwise the inner product is positive). 
Similarly, other points are placed in their corresponding buckets. 
In the search process, shown in Figure \ref{fig:LSH_illustration_b}, the query point $q$ is mapped to the bucket $(h_1(q),\ldots,h_6(q))=(-1,1,1,1,1,1)$, which matches the bucket of point $p'$. Thus, the hashing and search processes retrieve $p'$ as the \emph{most similar} neighbor of $q$. Also, since the one different label in the buckets of $p$ and $q$ implies that the angular difference is close to $\pi/6$ (six hyperplanes), $p$ is retrieved when querying the points whose angular difference from $q$ is less than $\pi/6$. Note that many more than six hyperplanes are %usually 
needed to ensure that the
%employed such 
returned angle similarity is approximately correct.
%that the approximated angles can be close to true ones. 
In a standard LSH scheme, the hashing process is performed several times by different sets of random hyperplanes, and the search procedure iterates over these random sets of hyperplanes. More details and examples can be seen in \cite{indyk1998approximate},\cite{van2010online}, and \cite{leskovec2014mining}.

\begin{figure}
\centering
\begin{subfigure}{.45\textwidth}
\centering
\vspace{-0.1cm}
\includegraphics[width=.9\linewidth]{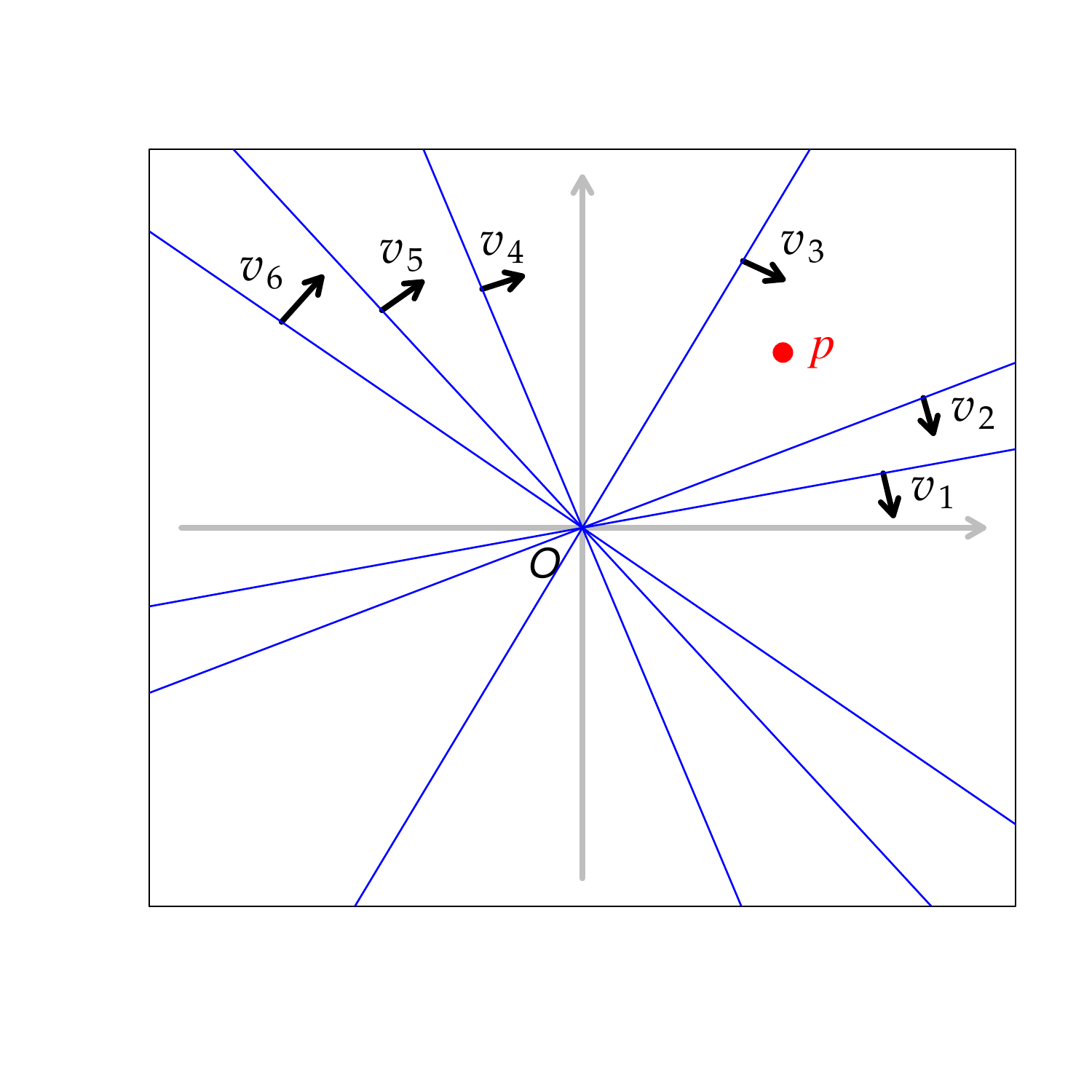}
\vspace{-0.65cm}
\caption{Hashing.}
\label{fig:LSH_illustration_a}
\end{subfigure}
\begin{subfigure}{.45\textwidth}
\centering
\vspace{-0.1cm}
\includegraphics[width=.9\linewidth]{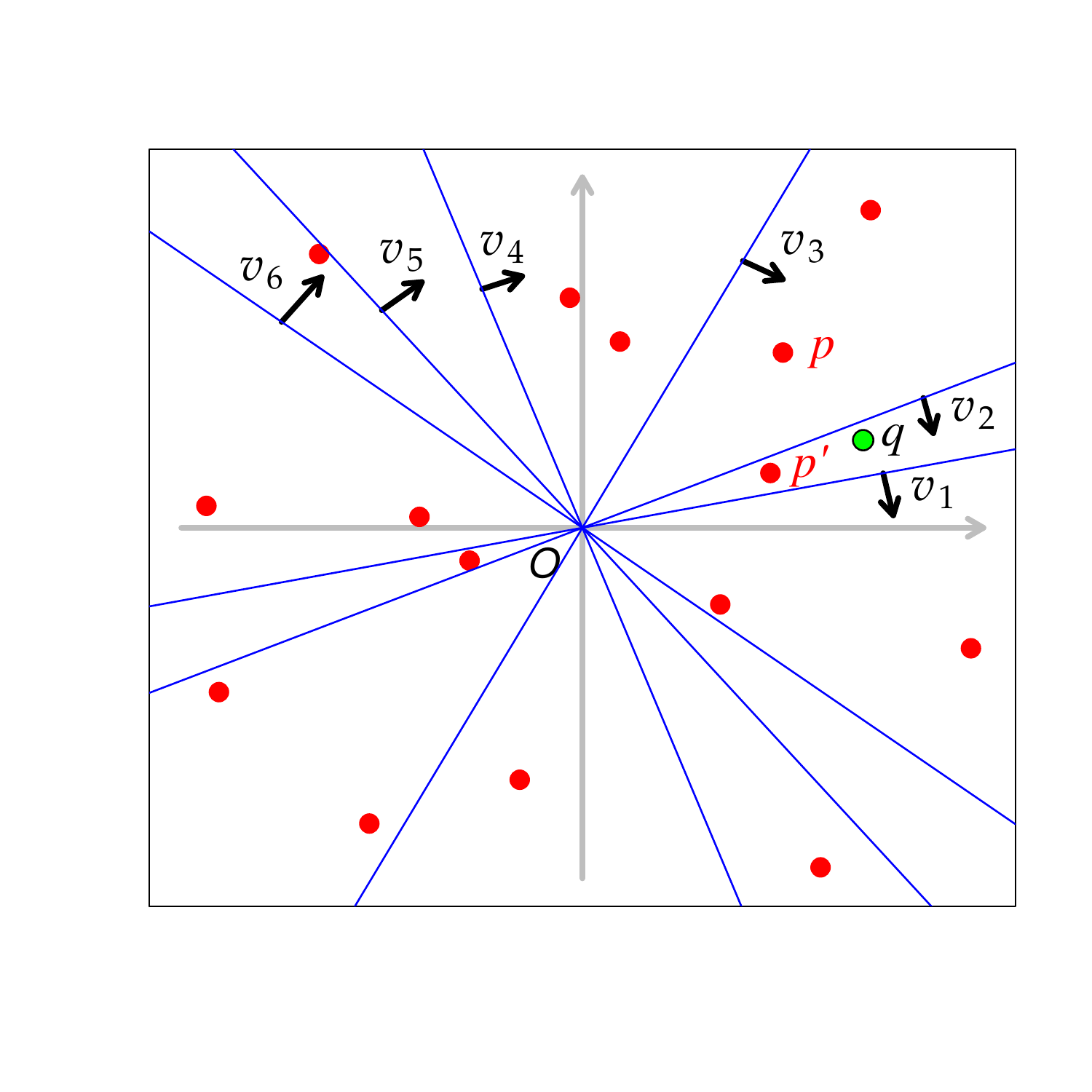}
\vspace{-0.65cm}
\caption{Search.}
\label{fig:LSH_illustration_b}
\end{subfigure}
\caption{\small Illustration of locality-sensitive hashing (LSH) scheme. Blue lines are random hyperplanes through origin, and $v_1,\ldots,v_6$ (black arrows) are the corresponding normal vectors. Red dots present stored data points, and green dot presents the query data point.}
\label{fig:LSH_illustration}
\end{figure}

Apart from cosine-similarity, \cite{jain2008fast} showed for the pairwise similarity 
\[
\frac{y_k^TA_jy_h}{\|G_jy_k\|_2\|G_jy_h\|_2},
\] where $y_k,y_h\in \mathbb{R}^d$, $G_j^TG_j=A_j$ and $A_j$ is a $d\times d$ positive-definite matrix that is updated for each iteration $j$, the hash function can be defined as:
\begin{equation}\label{hashfunction}
	h_{A_j}(y)=\begin{cases} 
      1 & r^TG_jy\geq 0 \\
      0 & \text{otherwise}
   \end{cases} ,
\end{equation}
where the vector $r$ is chosen at random from a $d$-dimensional Gaussian distribution. Let $G_j=I-U(X_j)[U^T(X_j)U(X_j)]^{-}U^T(X_j)$ and $A_j=G_j$ ($G_j$ is symmetric and idempotent), then $\cos(\vartheta)$ in \eqref{reducevariancebyfeature} can be represented as
\[
\cos(\vartheta)=\frac{U(x_{j+1})^TA_jU(x)}{\|G_jU(x_{j+1})\|_2\|G_jU(x)\|_2}.
\]
Thus, in the feature approximation method, an LSH scheme can be employed by storing $U(X_N)$ in advance and updating the hash function \eqref{hashfunction} at each iteration, where $y$ is replaced by $U(y)$. At query time, similar points are hashed to the same bucket with the query $U(x)$ and the results are guaranteed to have a similarity within a small error after repeating the procedure several times. In particular, for each update $j\rightarrow j+1$, given that the LSH method guarantees retrieval of points within the radius $(1+\epsilon)M$ from the query point $U(x)$, where $M$ is the distance of the true nearest neighbor from $U(x)$, the method requires $O(D^2+jDN^{1/(1+\epsilon)})$ computational cost, $O(D^2)$ for updating matrix $G_j$ (via the partitioned inverse equations \citep{harville1997matrix}) and computing the hash function $h_{A_j}(y)$ (via the implicit update in \cite{jain2008fast}), and $O(jDN^{1/(1+\epsilon)})$ for identifying the hashed query \citep{jain2008fast}, where $D$ is the number of eigenfunctions in Theorem \ref{featureThm}. In Section \ref{sec:example}, two examples show the benefit from the LSH approach in the feature approximation method.

\iffalse
\cite{bawa2005lsh} propose a method, LSH forest, to improve on the basic LSH scheme by eliminating the need of tuning parameters and also ensuring the accuracy for all queries while retaining the same storage requirements. Thanks to the package \texttt{scikit-learn} from \cite{scikit-learn}, the LSH forest method can be simply implemented using \texttt{Python}.

Since the candidates of $x_{j+1}$ are determined by the cosine values \eqref{subregion} of $C_{X_j}(x_{j+1})$ and $C_{X_j}(x)$ but not $U_{X_j}(x_{j+1})$ and $U_{X_j}(x)$, instead of computing $C_{X_j}(X_N)$ at each stage, which needs $O(j^3+DN)$ computation, the normal vectors of hash functions are projected by $[I-U(X_j)[U^T(X_j)U(X_j)]^{-}U^T(X_j)]$, The cost of a query . 
\fi

%Then, sets of data locations can be excluded in $O(1)$ time by checking whether there are vectors in the convex hull of the corners of the subregions defined at each split of the k-d tree, transformed according to $C_{X_j}(\cdot)$, which have sufficiently narrow angle with $C_{X_j}(x)$.

\begin{algorithm*}[htb]
  \caption{Feature approximation modification to Algorithm \ref{alg:algorithm1}.}
  \label{alg:algorithm2}
  \begin{algorithmic}[]
%          \State Set $j=1$ and $x_1$ as the point closest to the predictive location $x$. Throughout, let $X_j=\{x_1,x_2,\ldots,x_j\}$.
%          \State Let $D_{jk}(x_j)$ denote the $k$ nearest neighbors to $x_j$ in $X_N\setminus X_j$, the candidate locations not currently in the sub-design.
%          %\textit{set of candidate locations}, $\chi_j=\{x_{j+1}\in X_N:x_{j+1}\notin X_j \}$, as $D_{jk}(x)$. 
%          Set $\delta_{j+1}$ equal to the maximum variance reduction from $D_{jk}(x_j)$. That is,
%          \begin{equation*}
%          \delta_{j+1}=\arg\max\limits_{u \in D_{jk}(x)}{R(u)},
%          \end{equation*}
%          where $R(\cdot)$ is shown in \eqref{reducevariance}.
%          \label{alg2:step2}
%          \State Set $y=\phi^{-1}\left(\sqrt{\frac{\delta}{(1+\sqrt{j}\|\Phi_\Theta^{-1}(X_j,X_j)\Phi_\Theta(X_j,x)\|_2)^2+j\delta/\lambda_{\min}}}\right)$, where $\Phi_\Theta(x,y)=\phi(\|\Theta(x-y)\|_2)$ and 
\State In Step \ref{alg1:step3} of Algorithm \ref{alg:algorithm1}, replace $T(X_j)$ with $T^*(X_j)$, where 
          \begin{equation}\label{method2:candidate}
          \begin{split}
          T^*(X_j)=\{&u\in X_N\setminus X_j:\|\Theta(u-v)\|_2\leq y \text{ and } \cos^2(\vartheta)\geq\delta_{j+1}/\|C_{X_j}(x)\|^2_2\\
          & \text{ for some } v\in \{x,X_j\}\},\nonumber
          \end{split}
          \end{equation}
and $\vartheta$, $C_{X_j}(x)$ are defined in Theorem \ref{featureThm}. Then,
          \begin{displaymath}
           x_{j+1}=\arg\max\limits_{u\in T^*(X_j)}{R(u)}.
          \end{displaymath}
%          where $\vartheta$ is the angle between $C_{\Theta,X_j}(x)$ and $C_{\Theta,X_j}(u)$ and $C_{\Theta,X_j}(\cdot)$ is defined in \eqref{defineC}. Then,
%          \begin{displaymath}
%           x_{j+1}=\arg\max\limits_{u\in T^*(X_j)}{R(u)}.
%          \end{displaymath}
%          \label{alg2:step3}
%          \State Set $j=j+1$ and repeat \ref{alg2:step2} and \ref{alg2:step3} until the reduction in variance $R(x_{j+1})$ falls below a prespecified threshold or the local design budget is met.
  \end{algorithmic}
\end{algorithm*}

As an illustration of how cones in feature space relate to the design space, consider a full design $X_N$ consisting of 2500 ${\rm Unif}(0,1)$ data points, plotted in gray and yellow in the {\em left} panel of Figure \ref{f:reduced_feature}. The correlation function is $\Phi(x,x')={\rm exp}\{-\|(x-x')/10\|_2^2\}$ and the predictive location of interest is $x=(0.5,0.5)$, shown as a black triangle in the {\em left} panel. The first 7 design points are chosen greedily and indicated with red numbers. The {\em right} panel shows the first 2 components of the feature space (the first two eigenfunctions evaluated at the design points), colored and labeled correspondingly. The vector $C_{X_7}(x)$ is denoted as the middle dotted line in the {\em right} panel, with $|\vartheta|\le\pi/20$ shown as the outer dotted lines. Design points falling within these cones are shown in yellow in both panels. The design points in the {\em left} panel which fall in the yellow stripe have the most potential to reduce predictive variance.

\begin{figure}[ht!]
\centering
\includegraphics[angle=0,scale=0.65,trim=0 10 0 10]{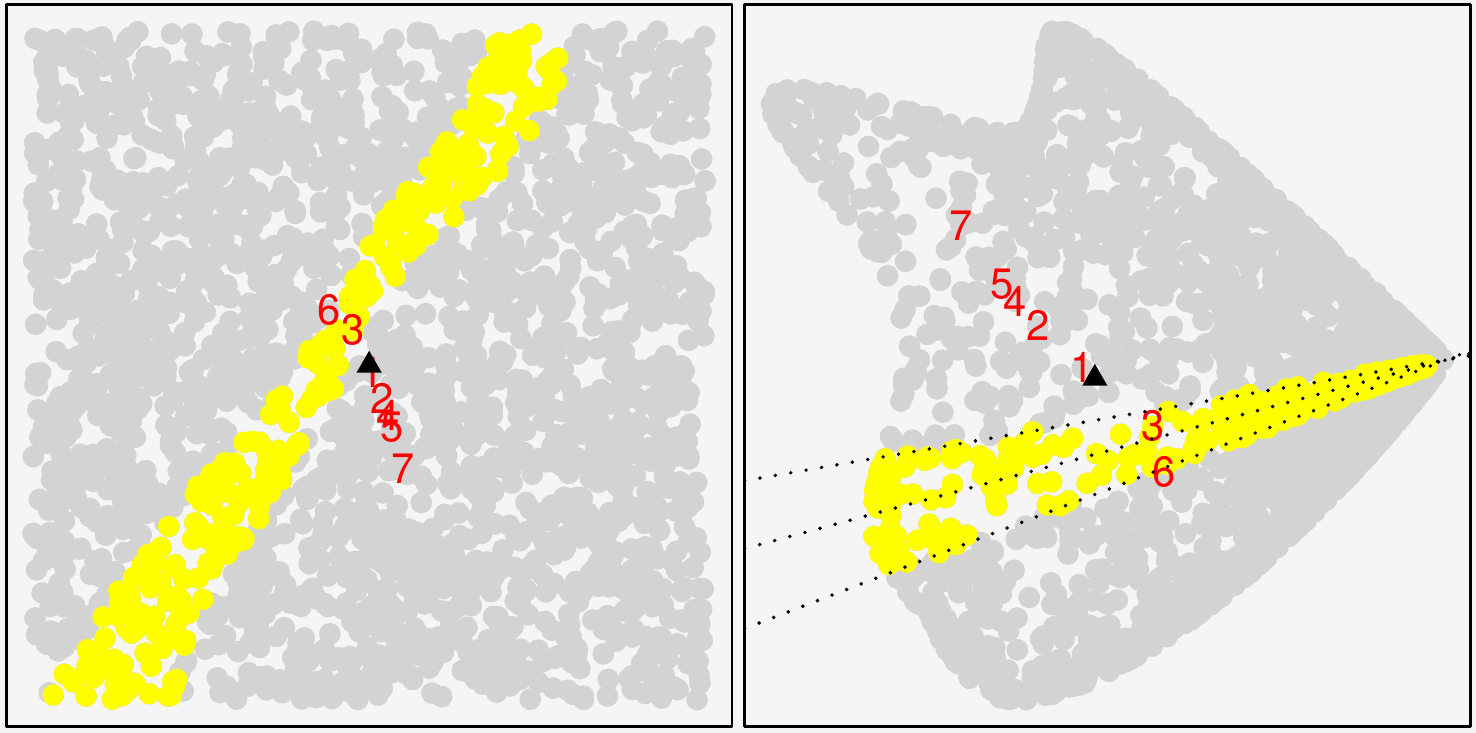}
\caption{\small Gray and yellow dots represent design points in the original space ({\em left}) 
and a $D=2$ dimensional feature space approximation ({\em right)}. Location of interest and current design  are annotated with black triangle and red numbers, respectively. Vector $C_{X_7}(x)$ and cones $|\vartheta|\le \pi/20$ shown with dotted lines. Design points falling within these cones are shown in yellow in both panels.}\label{f:reduced_feature}
\end{figure}

The computational complexity and storage of the proposed algorithms are summarized in Table \ref{table:algorithmcomparison}. Here, the original greedy approach proposed in \cite{gramacy2015local} is referred to as exhaustive search. Recall that $T(X_j)$ and $T^*(X_j)$ are the candidate sets from maximum distance method and feature approximation method, respectively. Let $|\cdot|$ denote the cardinality of a set. Since $|T(X_j)|$ and $|T^*(X_j)|$ are expected to be much smaller than $N$, the computational cost of the two proposed algorithms can be substantially reduced at each stage $j$ relative to the original greedy search.
%with scale $j^2$ for search ($j^2$ is due to the calculation for $R(u)$ at each $u$) 
%at each stage $j$, even though reasonable 
However note that %relatively sizeable 
preprocessing time, for computing benchmarks and eliminating search locations, is required for both methods. 
Also, with a k-d tree or LSH search method, the specially adapted data structure indeed improves  computational efficiency during the preprocessing period ($O(j^2+jN)\rightarrow O(j^2+j\log(N))$ and $O(j^2+D^2N)\rightarrow O(j^2+D^2+jDN^{1/(1+\epsilon)})$, respectively). Considering the two proposed methods, $|T^*(X_j)|$ might be expected to be much smaller than $|T(X_j)|$ if (i) the correlation function is well approximated by the finite set of eigenfunctions and eigenvalues and (ii) the dimension of input is not too large, since distance becomes a very powerful exclusion criteria in even moderately high-dimensional space. On the other hand, the maximum distance method has 
%gains the advantage by cheaper requirements (
smaller storage and preprocessing requirements. 
Section \ref{sec:example} presents two examples implementing the two proposed methods and shows the comparison. 

\begin{table}[h]
\centering
\scalebox{0.84}{
\begin{tabular}{|c|c||cc|cc|}
\toprule
 & Exhaustive &\multicolumn{2}{c}{Maximum Distance} & \multicolumn{2}{|c|}{Feature Approximation Method}\\
& Search & \multicolumn{2}{c}{Method} & \multicolumn{2}{|c|}{with $D$ Features*}\\
\midrule
{}   & {}   & w/o k-d tree   & w/ k-d tree   & w/o LSH   & w/ LSH\\
\midrule    
  Storage & $N$ & $N$ & $N$ &$ND$ & $ND$\\ 
  Preprocessing &  & $O(j^2+jN)$ & $O(j^2+j\log(N))$ & $O(j^2+D^2N)$ & $O(j^2+D^2+jDN^{1/(1+\epsilon)})$ \\    
  Search & $O(j^2N)$ & $O(j^2|T(X_j)|)$ & $O(j^2|T(X_j)|)$ & $O(j^2|T^*(X_j)|)$ & $O(j^2|T^*(X_j)|)$\\ 
\bottomrule
\end{tabular}
}
\caption{Complexity comparison between exhaustive search and two proposed methods for each update $j\rightarrow j+1$. The notation $|\cdot|$ denotes the cardinality of a set, and $\epsilon$ is a pre-specified value for the LSH method. 
*The complexity of pre-computation for feature approximation method is $O(D^3)$.}
\label{table:algorithmcomparison}
\end{table}

\section{Examples}\label{sec:example}
Two examples are discussed in this section: a two-dimensional example which demonstrates the algorithm and visually illustrates the reduction of candidates; and a larger-scale, higher-dimensional example. Both examples show the proposed methods considerably outperforming the original search method with respect to computation time. All numerical studies were conducted using \textsf{R} \citep{R} on a laptop with 2.4 GHz CPU and 8GB of RAM. The k-d tree and LSH were implemented via \textsf{R} package \texttt{RANN} \citep{RANN} and modifications to the source code of the \textsf{Python} package \texttt{scikit-learn} \citep{scikit-learn, bawa2005lsh}, and accessed in \textsf{R} through the \texttt{rPython} package \citep{rpython}.

\subsection{Two-dimensional problem of size $N=50^2$}\label{sec:example1}
Consider a computer experiment with full set of design locations $X_N$ consisting of a regular $50\times 50$ grid on $[-10,10]^2$ (2500 design points, light blue in Figure \ref{fig:illustration}) and take the predictive location of interest $x$ to be $(0.216,0.303)$ (circled $\times$ in Figure \ref{fig:illustration}). Set $\sigma^2=1$, and consider the Gaussian correlation function
\[\Phi_\Theta(x,y)=\exp\left\{-\left(\frac{(x_1-y_1)^2}{\theta_1}+\frac{(x_2-y_2)^2}{\theta_2}\right)\right\},\]
with $\theta_1=\theta_2=3$. This correlation function implies the $\phi$ in Algorithm \ref{alg:algorithm1} is $\phi(u)=\exp\{-u^2\}$ and $\Theta={\rm diag}(1/\sqrt{\theta_1},1/\sqrt{\theta_2})$. Then, we have $\phi^{-1}(v)=\sqrt{-\log{v}}$. 

Figure \ref{fig:illustration} illustrates the sub-design selection procedure shown in Algorithm \ref{alg:algorithm1}, in which $k=8$ nearest neighbors (from the candidate set) are used to generate the threshold in Step \ref{alg1:step2}. In Figure \ref{fig:illustration},
the gray dots represent the current design $X_j(x)$, the red dots represent the
optimal augmenting point $x_{j+1}$, and the points which are excluded from the
search for that location are those which fall outside the yellow shaded
region. The panels in the figure correspond to greedy search steps $j\in
\{3,16,29\}$. Notably, the optimal additional design points illustrated in Figure \ref{fig:illustration} are not always the nearest neighbors to the location of interest. In this example, only 7.40\% (185/2500) of candidates need to be
searched in the beginning. Even after choosing thirty data points, there is no
need to search much more than half of the full data (56.92\%=1423/2500).

Continuing the same example, Figure \ref{fig:illustration} also shows substantial improvement from the feature approximation method. In the example, a $D=500$ dimensional feature space approximation is pre-computed using Nystr{\"o}m's method \citep{williams2001using}. The points annotated with green $+$s are the points which are not excluded from the search. In fact, the number of candidates which need to be searched is usually reduced at least 10 fold and in many cases 50 or 100 fold, or more. 

\begin{figure}[ht!]
\centering
\vspace{-0.1cm}
\includegraphics[scale=0.31,trim=5 20 5 55,clip=TRUE]{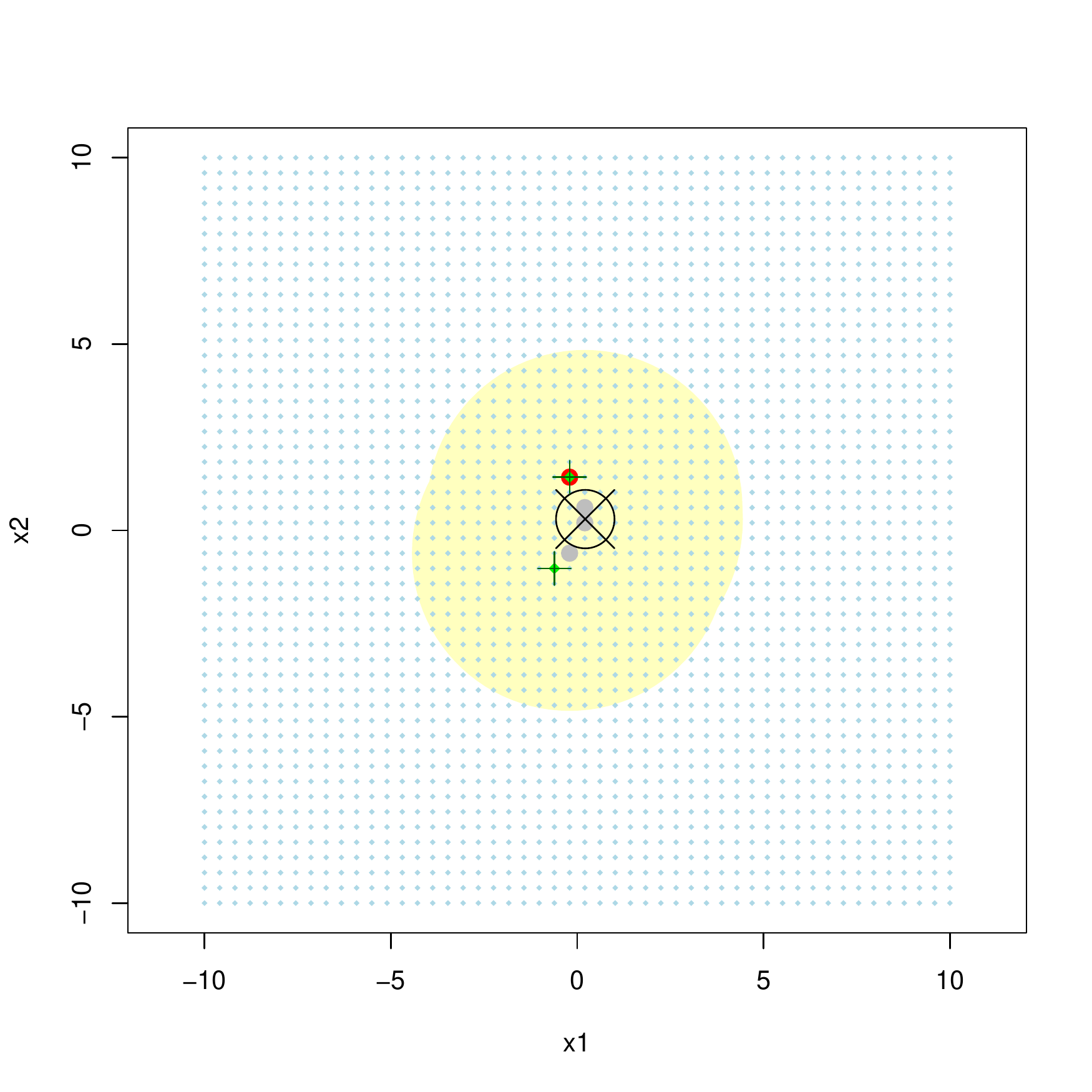}
\includegraphics[scale=0.31,trim=5 20 5 55,clip=TRUE]{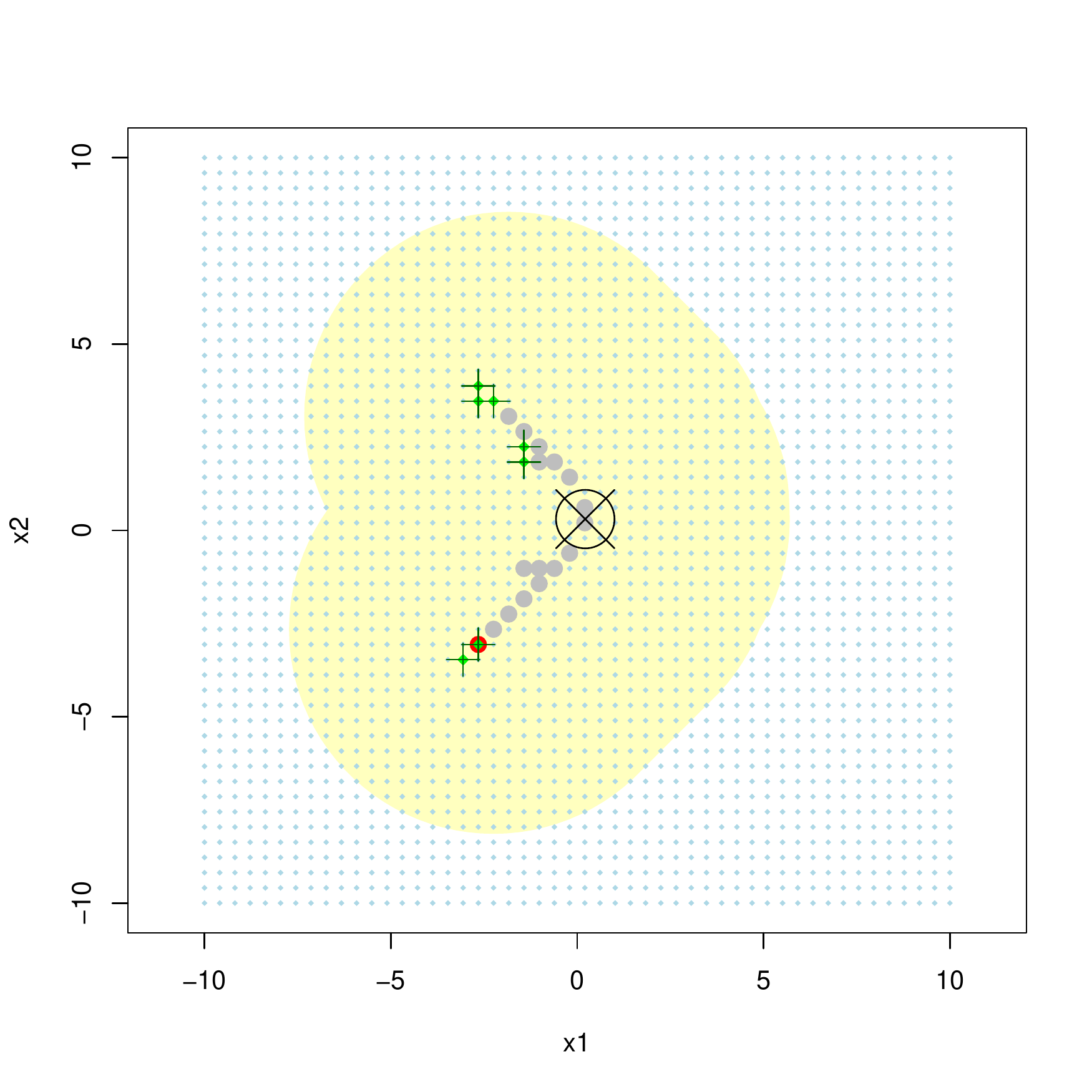}
\includegraphics[scale=0.31,trim=5 20 5 55,clip=TRUE]{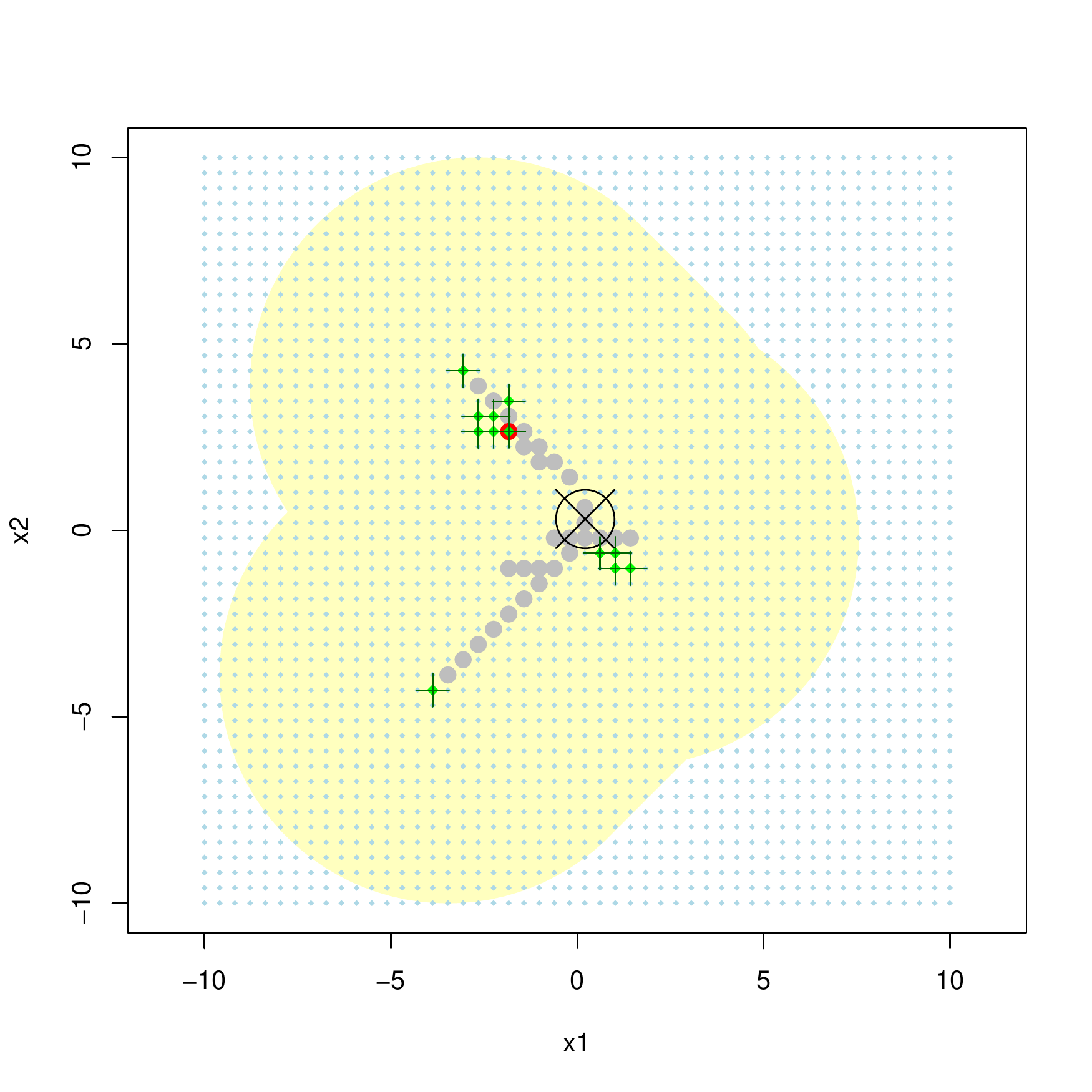}
\vspace{-0.65cm}
\caption{\small Left, middle, and right panels respectively illustrate  selection
at $j=3,16,$ and $29$. The circled $\times$ is the location of interest, $(0.216,0.303)$. Gray dots are the current design points;
red dots are the optimal $x_{j+1}$; points which are
excluded from the search based on maximum distance method are 
those which fall outside the yellow shaded region. Points which are
excluded from the search based on feature approximation method are 
those which are not annotated with a green $+$.}\label{fig:illustration}%\label{f:reduced}
\end{figure}

%
%with $D=500$ features is applied. The green dots with cross symbols illustrate the region $T^*(X_j)$ in Figure \ref{algorithm2}, where $j=2,11$ and $43$. The region indeed contains the optimal additional design point (red dot) at the three steps with much smaller number of potential candidates (green dots), which is $2,4$ and $1$, respectively. This demonstates the much more reduced candidates by feature approximation search compared to minimun distance search (yellow shaded region). 

%\begin{figure}[h]
%\centering
%\begin{subfigure}{.32\textwidth}
%\includegraphics[width=\textwidth]{pic/figure_2_by_2.pdf}
%\end{subfigure}
%\begin{subfigure}{.32\textwidth}
%\includegraphics[width=\textwidth]{pic/figure_11_by_2.pdf}
%\end{subfigure}
%\begin{subfigure}{.32\textwidth}
%\includegraphics[width=\textwidth]{pic/figure_43_by_2.pdf}
%\end{subfigure}
%\caption{Left, middle, and right panels respectively illustrate the selection algorithm for $j=2,11$, and $43$. Gray dots represent current design points, red dots represent the optimal additional design point, and points which are excluded from the search are those which fall outside the yellow shaded region. Yellow shaded region is determined by minimum distance search.}
%\label{fig:illustration_by_method2}
%\end{figure}

While the maximum distance method and original greedy approach proposed in \cite{gramacy2015local} produce the same sub-designs and in turn the same predictive variances,
the feature approximation method is \emph{approximate} and can produce different sub-designs and in turn slightly different predictive variances (not \emph{necessarily} inflated due to greedy nature of search).
Table \ref{table:variancecomparison} shows relative differences in predictive variance resulting from feature approximation method with $D=10$, $200$ and $500$ features as compared to maximum distance method (or equivalently the original greedy approach).
The number of search candidates is listed in parentheses.
%The approximation to the correlation function in the feature approximation method has the potential to result in a loss of variance reduction compared to the maximum distance method. It should be noticed that the maximum distance method and local GP approach by \cite{gramacy2015local} lead to the equal variances since both methods produce the same sub-design. Here 
The relative difference in predictive variance 
is defined as
\[\frac{V_{j,FA}(x)-V_{j,MD}(x)}{V_{j,MD}(x)},\]
%is used to measure the loss of variance reduction, 
where $V_{j,FA}(x)$ and $V_{j,MD}(x)$ denote the predictive variance of the emulator at location $x$ at stage $j$ using the feature approximation method and maximum distance method, respectively. 
%Table \ref{table:variancecomparison} shows the relative difference in variance resulting from the feature approximation method with $D=10$, $300$ and $500$ features as compared to the maximum distance method. 
As might be expected, a larger number of features, $D=500$, reduces the search candidates without any loss in variance reduction. For $D = 200$, although there are small differences in predictive variance, the discrepancies may be small enough to be of little practical consequence. At stage 15, 20, and 25, the predictive variance for $D=200$ is even smaller than maximum distance method, due to the greedy nature of the searches. 
%The reason for the smaller variance is the best sub-design based on minimum distance search is a greedy search method. 
Notably, if a small number of features, say $D=10$, is chosen, feature approximation search offers little improvement over maximum distance method in terms of search reduction, even though the predictive variances are similar to maximum distance method. 
In this case, $D=200$ features might be a reasonable choice, balancing ease of computation and small predictive variance. 
%Although we noted that a small number of features resulting in minimal loss in variance is desirable, we did not provide the best choice of the number for any case. This will be a direction for future research for us.
\begin{table}[h]
\centering
\begin{tabular}{|c|cc|cc|C{0.8cm}c||c|}
\toprule 
Relative Difference & & & & & & & Variance by\\
(\# of searching  &\multicolumn{2}{c|}{$D=10$} &\multicolumn{2}{c|}{$D=200$} & \multicolumn{2}{c||}{$D=500$} & Maximum\\
    candidates) & & &  & &  &  &  Distance Method\\
\midrule 
  Stage 10 & 0 & (842) & 0.178 & (4) & 0 & (47) & $1.95\times 10^{-6}$ (844)\\ 
  Stage 15 & 0.006 & (1057)  & -0.76 & (7) & 0 & (69) & $9.35\times 10^{-7}$ (1040)  \\ 
  Stage 20 & 0.018 & (1149)  & -0.722  & (30) & 0 & (12) & $6.12\times 10^{-7}$ (1168)  \\ 
  Stage 25 & -0.155 & (1332)  & -0.091 & (4) & 0 & (116) & $1.66\times 10^{-7}$ (1295)\\ 
  Stage 30 & 0.009 & (1459) & 0.024 & (20) & 0 & (2) & $1.28\times 10^{-8}$ (1423) \\
\bottomrule
\end{tabular} 
\caption{The relative difference in variance of the emulator at location $(0.216,0.303)$ between maximum distance search as a baseline and feature approximation search with number of features $D$: 10, 200 and 500. 
Baseline variance by maximum distance search is shown in the last column. The value in parentheses is the number of search candidates.}
\label{table:variancecomparison}
\end{table}

To further compare the performance of the proposed methods with original greedy approach (exhaustive search), a Sobol's quasi-random sequence \citep{bratley1988algorithm} of 100 predictive locations is generated. 
Table \ref{table:totalcomparisonEX1} shows the average computation time and proportion of search candidates for the proposed methods and exhaustive search over the 100 predictive locations. 
The 
proportion
%reduction in the number 
of search candidates for the maximum distance method is from 22.78\% to 39.15\%. The method also marginally speeds up computation time from 21 to 18 seconds with a k-d tree data structure. 
On the other hand, although the feature approximation method needs 6 seconds for computing the features in advance, the proportion of search candidates for feature approximation method with $D=200$ is reduced to 5.88\% at stage 30. The computation time, on an ordinary laptop, is less than 15 seconds for 30 stages of iteration in the $N=50^2$ experiment. 
Relative average predictive variance increases due to using the feature approximation method, both with and without LSH, are shown in Table \ref{table:variancecomparisonEX1}.
%, is slightly different from that obtained using the maximum distance method. 
At stage 30 the average predictive variance increases due to using the feature approximation method are, with and without LSH, 4.6\% and 1.4\%, respectively, potentially small enough to be disregarded in a practical context.
%%However, the discrepancy is relatively small to be ignored. 
%Introducing LSH method may lead to different results from original feature approximation method (e.g., proportions of candidate in Table \ref{table:totalcomparisonEX1}, and predictive variance in Table \ref{table:variancecomparisonEX1}) since it is using an approximate similarity-search. 
%Nevertheless, with the reasonable difference, it 
The LSH data structure
also marginally reduces search time from 13 to 11 seconds. Recall that the feature approximation method with LSH approximates both the covariance function and the cosine similarity measure, so the candidate set is slightly different from the one without LSH.
%It is worth noting that introducing 
While
in this moderately-sized problem the k-d tree and LSH data structures do not greatly improve the computational cost (at stage 30, k-d tree: $21\rightarrow 18$, LSH: $13\rightarrow 11$), in a larger-scale problem the improvements due to incorporating a k-d tree or LSH data structure can be relatively substantial, as will be shown in next subsection.

\begin{table}[h]
\centering
\begin{tabular}{|c|c||cc|cc|}
\toprule
Seconds & Exhaustive &\multicolumn{2}{c}{Maximum Distance} & \multicolumn{2}{|c|}{*Feature Approximation}\\
(Candidates \%) & Search & \multicolumn{2}{c}{Method} & \multicolumn{2}{|c|}{Method with $D=200$}\\
\midrule
{}   & {}   & w/o KD-tree   & w/ KD-tree   & w/o LSH   & w/ LSH\\
\midrule
  Stage 10 & 11 & 3 (22.78\%)& 2 (22.78\%) & 3 (2.69\%)& 2 (1.98\%)\\      
  Stage 15 & 19 & 5 (28.04\%) & 4 (28.04\%)& 5 (3.31\%)& 4 (2.90\%)\\ 
  Stage 20 & 30 &  9 (32.68\%)& 7 (32.68\%)& 8 (7.57\%) & 6 (5.56\%) \\ 
  Stage 25 & 44 & 14 (36.10\%) &12 (36.10\%)&10 (6.41\%) &  8 (5.66\%)\\ 
  Stage 30 &  61 & 21 (39.15\%) & 18 (39.15\%) & 13 (5.88\%) & 11 (6.65\%)\\
\bottomrule
\end{tabular}
\caption{Average time (seconds) comparison between exhaustive search and two proposed methods in two-dimensional setting with $N=50^2$ over 100 Sobol predictive locations. The values in parentheses are the average percentage searched of full design. *Pre-computation time for feature approximation method is 6 seconds.}
\label{table:totalcomparisonEX1}
\end{table}

\begin{table}[h]
\centering
\begin{tabular}{|c||cc||c|}
\toprule
Relative Difference  &\multicolumn{2}{c||}{Feature Approximation} & Average Variance by\\
 & \multicolumn{2}{c||}{Method with $D=200$} &  Maximum Distance Method\\
\midrule
{}   & w/o LSH   & w/ LSH & {}\\
\midrule
  Stage 10 & 0.192 & 0.168 & $4.15\times 10^{-6}$ \\      
  Stage 15 & 0.348 & 0.140 & $1.08\times 10^{-6}$ \\ 
  Stage 20 & 0.171 & 0.066 & $4.34\times 10^{-7}$ \\ 
  Stage 25 & 0.011 & -0.124 & $2.26\times 10^{-7}$\\ 
  Stage 30 & 0.046 & 0.014  & $1.62\times 10^{-7}$\\
\bottomrule
\end{tabular}
\caption{The relative difference in average predictive variance of the emulator between maximum distance search as a baseline and feature approximation search with number of features $D=200$ over 100 Sobol predictive locations in $2$-dimensional setting.}
\label{table:variancecomparisonEX1}
\end{table}

\subsection{$6$-dimensional problem of size $N=5\times 10^4$}\label{sec:example2}
%
%{\bf [RBG - the back-and-forth between seconds and minutes is confusing.  I suggest reporting seconds and putting minutes in parentheses.]}

Even more substantial reductions in the number of search candidates are seen for both methods in a larger-scale, higher-dimensional setting. In this example, we generate a $6$-dimensional Sobol's quasi-random sequence of size $N=5\times 10^4$ in a $[-1,1]^{6}$ for the design space and the predictive locations are chosen from a Sobol's quasi-random sequence of size 20. Set $\sigma^2=1$ and tuning parameter $k=30$, and take the correlation function $\Phi_\Theta(x,y)=\exp\{-\sum^{6}_{i=1}\frac{(x_i-y_i)^2}{\theta_i}\}$ with $\theta_i=1.5,i=1,\ldots,6$.

Table \ref{table:totalcomparison} shows the comparison between exhaustive search and the two proposed methods. 
As the table shows, the two proposed methods outperform exhaustive search in terms of computation time. 
Further, the number of candidates searched for both methods are less than $10\%\;(=5000/50000)$ across all 30 stages. While the exhaustive search takes 3423 seconds ($\approx$ 1 hours) for 30 stage iterations, 240 seconds (4 minutes) are required for maximum distance method. Incorporating a k-d tree data structure, the computation time decreases to 193 seconds ($\approx$ 3.2 minutes). Compared to the 2-dimensional example in Section \ref{sec:example1}, incorporating a k-d tree data structure has moderately more computational benefit in this larger-scale setting.

The feature approximation method, as expected, has a smaller-sized candidate set than maximum distance method. Moreover, using $D=300$ features, less than 2\% average predictive variance increases at stage 30 are observed due to approximation, as shown in Table \ref{table:variancecomparisonEX2}. 
On the other hand, due to the moderately expensive computation in Algorithm \ref{alg:algorithm2} using $D=300$ features, in this example feature approximation search without LSH is more time-consuming than the maximum distance method. 
As shown in Table \ref{table:algorithmcomparison}, the computation of more design points incurs higher computational costs in order of $D^2$ for feature approximation search without LSH (complexity $O(j^2+D^2N)$). With an LSH approximate similarity-search method, computation time is reduced by 189 seconds ($\approx$ 3 minutes) across all 30 stages.
%A direction for further research is to explore the minimum number of features needed to achieve a fixed level of accuracy is desired. Understanding this relationship may further reduce computational work for our algorithm. 
While the feature approximation approach outperforms exhaustive search,
it appears to be most useful when 
%and can be efficient  when 
the maximum distance approach is very conservative, such as in the two-dimensional case in Section \ref{sec:example1}.

\begin{table}[h]
\centering
\begin{tabular}{|c|c||cc|cc|}
\toprule
Seconds & Exhaustive &\multicolumn{2}{c}{Maximum Distance} & \multicolumn{2}{|c|}{*Feature Approximation}\\
(Candidates \%) & Search & \multicolumn{2}{c}{Method} & \multicolumn{2}{|c|}{Method with $D=300$}\\
\midrule
{}   & {}   & w/o KD-tree   & w/ KD-tree   & w/o LSH   & w/ LSH\\
\midrule
  Stage 10 & 488 & 24 (2.77\%)& 10 (2.77\%)& 74 (1.71\%) & 26 (1.7\%)\\      
  Stage 15 & 953 & 50 (4.27\%) & 28 (4.27\%) & 126 (3.34\%)& 45 (3.68\%)\\ 
  Stage 20 & 1601 & 93 (5.84\%)& 62 (5.84\%) & 199 (4.77\%) & 76 (5.16\%)\\ 
  Stage 25 & 2423 & 154 (7.34\%) & 115 (7.34\%)& 296 (5.28\%) & 121 (5.21\%)\\ 
  Stage 30 & 3423 & 240 (8.62\%) & 193 (8.62\%) & 435 (6.70\%)& 189 (6.38\%)\\
\bottomrule
\end{tabular}
\caption{Time (seconds) comparison between exhaustive search and two proposed methods in $6$-dimensional setting with $N=5\times 10^4$ over 20 Sobol predictive locations. The values in parentheses shows the percentage searched of full design. *Pre-computation time for feature approximation method is 26 seconds.}
\label{table:totalcomparison}
\end{table}

\begin{table}[h]
\centering
\begin{tabular}{|c||cc||c|}
\toprule
Relative Difference  &\multicolumn{2}{|c|}{Feature Approximation} & Average Variance by\\
 & \multicolumn{2}{|c|}{Method with $D=300$} &  Maximum Distance Method\\
\midrule
{}   & w/o LSH   & w/ LSH & {}\\
\midrule
  Stage 10 & 0.049 & 0.047 & $0.2328$ \\      
  Stage 15 & 0.030 & 0.032 & $0.2120$ \\ 
  Stage 20 & 0.023 & 0.022 & $0.1997$ \\ 
  Stage 25 & 0.017 & 0.017 & $0.1913$\\ 
  Stage 30 & 0.016 & 0.016  & $0.1850$\\
\bottomrule
\end{tabular}
\caption{The relative difference in average predictive variance of the emulator between maximum distance search as a baseline and feature approximation search with number of features $D=300$ over 20 Sobol predictive locations in $6$-dimensional setting.}
\label{table:variancecomparisonEX2}
\end{table}

%\begin{table}[h]
%\centering
%\begin{tabular}{|c|c||c|c|c|}
%\hline 
%        &     &Maximum   & Maximum  & *Feature \\
%Seconds & Exhaustive  & Distance  &  Distance Method&  Approximation\\ 
%  (Candidates \%) & Search & Method & with KD-tree & Method with $D=500$\\
%\hline   Stage 5 & 823 & 50 (0.10 \%) & 1 (0.10 \%) & 137 (0.10 \%)\\
%  Stage 10 & 2675 & 102 (0.13 \%)& 5 (0.13 \%)& 279 (0.13 \%)\\      
%  Stage 15 & 5424& 159 (0.20 \%) & 16 (0.20 \%) & 490 (0.18 \%)\\ 
%  Stage 20 & 9224 & 228 (0.41 \%)& 36 (0.41 \%) & 713 (0.38 \%)\\ 
%  Stage 25 & 14078& 314 (0.35 \%) & 65 (0.35 \%)& 941 (0.31 \%)\\ 
%  Stage 30 &  20000& 402 (0.43 \%) & 99 (0.43 \%) & 1176 (0.36 \%)\\
%\hline 
%\end{tabular} 
%\caption{Time (seconds) comparison between exhaustive search and two proposed methods in $30$-dimensional setting with %$N=10^4$ at predictive location $x=(0.613,0.613,\ldots,0.613)$. The values in bracket shows the percentage searched of %full design. *Pre-computation time for feature approximation method is 374 seconds.}
%\label{table:totalcomparison}
%\end{table}

%\iffalse
%{\bf [Can we compare timings for naive approach (check variance reduction of all points), distance-based with kd-tree, and distance/feature-based with kd-trees and precomputed features.]}
%
%{\bf [BH - New example should have range of dimensions (of original space), say $d=3, 10, 50$. Perhaps mean (sd) of number points need to achieve predictive variance of 0.01 over random designs (size will need to increase with dimension, $N=10^4,10^5,10^6$?) and random point of interest?]}
%\fi

\section{Conclusion and Discussion}\label{sec:conclusion}
Emulators have become crucial for approximating the relationship between input and output in computer simulations. However, as data sizes continue to grow, GP emulators fail to perform well due to memory, computation, and numerical issues. In order to deal with these issues, \cite{gramacy2015local} proposed a local GP emulation technique accompanied by a sequential scheme for building local sub-designs by maximizing reduction in variance.  We showed that an important (exhaustive) search subroutine could be substantially shortcut without compromising on accuracy, leading to substantial reductions in computing time.

In particular, 
using the distance-based structure of most correlation functions in GP models, 
%we find out it has only little potential benefit where the 
we showed that input locations distant from the predictive location of interest offer little potential for variance reduction.
%, therefore minimum distance bound is investigated in Section \ref{sec:reducedsearch} and we suggest 
We proposed a \textit{maximum distance method} to speed up construction of local GP emulators on the neighborhood of the existing sub-design and predictive location. 
Taking a step further, we observed that, since the correlation functions in GP models can be uniformly approximated by a finite sum of features via eigen-decomposition, 
mapping the original space into a feature space by the eigenfunctions can further reduce the search scope. We developed a \textit{feature approximation method} that determines viable candidates in terms of the angle between two projected feature vectors.
This leads to an even smaller proportion of viable candidates for searching. Taken together,
the two reductions lead to an order of magnitude smaller search set. 

We provided two examples that illustrate how the two search methods
perform.  
Obtaining accurate predictions for large-scale problems takes only a few minutes, on an ordinary laptop.  For instance, maximum distance search leveraging a k-d tree data structure takes less than 4 minutes to search for effective candidates in the second example while the full search, by comparison, takes about one hour.

%Along the same lines, t
The two proposed methods can be extended for selecting more than one point in each stage $j$ in a straight-forward manner. 
For example, suppose two points are to be selected in each stage. 
Let $j'=2j$, $X_{j'}$ be the current sub-design at stage $j$, and $x_{j'+1}$ and $x_{j'+2}$ be two points selected at the stage $j+1$. 
Proposition 1 can be extended to $V_{j+1}(x)=V_j(x)-\sigma^2R^*(x_{j'+1},x_{j'+2})$ for a function $R^*$,
%Following the same framework, 
Theorem 1 can narrow the window of potential \emph{pairs} of candidate locations, to say $T'(X_j)$, and Algorithm \ref{alg:algorithm1} can be updated accordingly.
%can be simply modified.
On the other hand, retaining good computational properties in a batch-sequential framework is not straight-forward. 
%However, in the last step of algorithm, 
For example, searching for the optimal candidates, $(x_{j'+1},x_{j'+2})=\arg\max_{(u_1,u_2)\in T'(X_j)}R^*(u_1,u_2)$, might be very expensive, say $O(|T'(X_j)|^2)$, compared to searching for one point in each stage.
%, since the computation for the step requires $O(|T'(X_j)|^2)(=O\left(\frac{|T'(X_j)|(|T'(X_j)|-1)}{2}\right))$ costs. 
Efficiently augmenting multiple points at each stage, for example by alternating maximizations on $x_{j'+1}$ and $x_{j'+2}$, might be worth exploring in future work.

The essential ideas of the proposed approaches have potential for application 
in search space reduction in global optimization.
%to other sequential schemes which also aim at global optimization. 
Consider the following example.
\cite{chen2008sequential} modified the \textit{maximum entropy design} \citep{shewry1987maximum} for use as a sequential algorithm
%For example, 
to efficiently construct a space-filling design in computer experiments.
They showed that the algorithm can be simplified to selecting a new point that maximizes the so-called \textit{sequential maximum entropy criterion}
%, that is,
\[
x_{j+1}=\arg\min\limits_{u\in \mathcal{D}\setminus X_j}{\Phi_\Theta(u,X_j)\Phi_\Theta(X_j,X_j)^{-1}\Phi_\Theta(X_j,u)},
\]
where $\mathcal{D}$ is a \textit{discrete} design space. 
For this global optimization problem, 
let
%following the same framework of our proposed approaches, by setting
\begin{align*}
d_{\max}(x_{j+1})=\max\{\|\Theta(x_1-x_{j+1})\|_2,\|\Theta(x_2-x_{j+1})\|_2,\ldots,\|\Theta(x_j-x_{j+1})\|_2\}
\end{align*}
and $\delta>0$.
It can be shown that if $d_{\max}(x_{j+1})\leq\phi^{-1}\left(\sqrt{\lambda_{\max}\delta}\right)$, where $\lambda_{\max}$ is the maximum eigenvalue of $\Phi_\Theta(X_j,X_j)$, then the objective function $\Phi_\Theta(x_{j+1},X_j)\Phi_\Theta(X_j,X_j)^{-1}\Phi_\Theta(X_j,x_{j+1})>\delta$. 
Thus, similar to Algorithm \ref{alg:algorithm1}, a maximum distance approach could be used to eliminate search candidates. 
For other specific global optimization problems, detailed examination is needed.

An implicit disadvantage of these methods is the impact of the correlation parameters. Take the example in Figure \ref{mindistance}, where $d_{\min}(x_{9})<3.07$. From the definition \eqref{mindisdef} of $d_{\min}(x_{j+1})$, suppose $\Theta=(1/\sqrt{\theta},1/\sqrt{\theta})$, then the larger $\theta$ is, the bigger the search area, the yellow shaded region in Figure \ref{mindistance}. The reason is that when $\theta$ is large, the correlation is close to one and the data points tend to be highly correlated, implying that every data point in the full design carries important information for each predictive location. In other words,
the algorithm requires more computation for ``easier'' problems---i.e., with a ``flatter'' surfaces.  
On the other hand ``flatter'' surfaces do not require large sub-designs to achieve small predictive variance.
%It may be that in such cases a simple subsetting scheme suffices for prediction.  
%Determining when to apply such global, versus local, techniques remains an open area of research.

An improvement worth exploring is how to determine of the number of features $D$ in the feature approximation method. Cross-validation to minimize predictive variance of an emulator may present an attractive option. 
Finally, a examination of the choice between the maximum distance and feature approximation methods might be desirable. 
Although using them both in concert guarantees a smaller candidate set in the feature approximation method, pre-computation of the features constitutes a moderately expensive sunk cost in terms of computation and storage. 
In the example in Section \ref{sec:example2}, a $500\times 50,000$ matrix needed to be computed and stored in advance. In either case, the two methods outperform exhaustive search as shown in Table \ref{table:totalcomparison}.

% which method is preferable?
%Sequential scheme? the correlation parameter? (parameter affects the result)? stopping time?

\subsection*{Acknowledgments}

All three authors would like to gratefully acknowledge funding from the
National Science Foundation, award DMS-1621746.

\section{Appendices}
\subsection{Proof of Proposition \ref{prop1}}\label{proofreducevariance}
In the variance definition \eqref{varfun}, the variance of $Y(x)$ at stage $j+1$ is
\begin{equation}\label{varfunnewloaction}
V_{j+1}(x)=\sigma^2\{\Phi_\Theta(x,x)-\Phi_\Theta(x,X_{j+1})\Phi_\Theta(X_{j+1},X_{j+1})^{-1}\Phi(X_{j+1},x)\}.
\end{equation}

Since $X_{j+1}$ is comprised of $X_j$ and $x_{j+1}$, \eqref{varfunnewloaction} can be rewritten as
\begin{multline}\label{beforesimpliedpartition}
V_{j+1}(x)=\sigma^2\Big\{\Phi_\Theta(x,x)-\\
\;\;\; \;\;\; \left[\begin{matrix}\Phi_\Theta(x,x_{j+1})&\Phi_\Theta(x,X_j)\end{matrix}\right]
\left[\begin{matrix}
\Phi_\Theta(x_{j+1},x_{j+1})  & \Phi_\Theta(x_{j+1},X_{j})\\ 
\Phi_\Theta(X_{j},x_{j+1}) & \Phi_\Theta(X_{j},X_{j})
\end{matrix}\right]^{-1}
\left[\begin{matrix}
\Phi_\Theta(x,x_{j+1}) \\ 
\Phi_\Theta(X_{j},x)
\end{matrix}\right]\Big\}.
\end{multline}

For simplicity, the second term of \eqref{beforesimpliedpartition} can be written as a partitioned matrix, that is,
\begin{equation}\label{aftersimpliedpartition}
\left[\begin{matrix}a^T_1 & a^T_2\end{matrix}\right]
\left[\begin{matrix}
B_{11} &  B_{12} \\
B_{21} & B_{22}
\end{matrix}\right]^{-1}
\left[\begin{matrix}a_1 \\ a_2\end{matrix}\right],
\end{equation}
where 
\begin{equation*}
a_1=\Phi_\Theta(x,x_{j+1}), a_2=\Phi_\Theta(X_{j},x),
\end{equation*}
\begin{equation*}
B_{11}=\Phi_\Theta(x_{j+1},x_{j+1}), B_{12}=\Phi_\Theta(x_{j+1},X_{j})=B^T_{12} \text{ and } B_{22}=\Phi_\Theta(X_{j},X_{j}).
\end{equation*}

Applying partitioned matrix inverse results \citep{harville1997matrix} and simplifying \eqref{aftersimpliedpartition} gives
\begin{equation}\label{partitionresult}
a^T_2B^{-1}_{22}a_2+(a_1-B_{12}B^{-1}_{22}a_2)^TB^{-1}_{11\cdot 2}(a_1-B_{12}B^{-1}_{22}a_2),
\end{equation}
where $B_{11\cdot 2}=B_{11}-B_{12}B^{-1}_{22}B_{21}$.

Then, taking \eqref{partitionresult} into \eqref{beforesimpliedpartition} leads to 
\begin{align*}
V(X_{j+1})&=\sigma^2\{\Phi_\Theta(x,x)-a^T_2B^{-1}_{22}a_2-(a_1-B_{12}B^{-1}_{22}a_2)^TB^{-1}_{11\cdot 2}(a_1-B_{12}B^{-1}_{22}a_2)\}\\
&=\sigma^2\{\Phi_\Theta(x,x)-\Phi_\Theta(x,X_j)\Phi_\Theta(X_j,X_j)^{-1}\Phi_\Theta(X_j,x)\\
&\;\;\;\;\; -(a_1-B_{12}B^{-1}_{22}a_2)^TB^{-1}_{11\cdot 2}(a_1-B_{12}B^{-1}_{22}a_2)\}\\
&=V(X_{j})-\sigma^2\{(a_1-B_{12}B^{-1}_{22}a_2)^TB^{-1}_{11\cdot 2}(a_1-B_{12}B^{-1}_{22}a_2)\}\\
&=V(X_{j})-\sigma^2R(x_{j+1}),
\end{align*}
where 
\begin{align*}
R(x_{j+1})&=(a_1-B_{12}B^{-1}_{22}a_2)^TB^{-1}_{11\cdot 2}(a_1-B_{12}B^{-1}_{22}a_2)\\
&=(a_1-B_{12}B^{-1}_{22}a_2)^2/B_{11\cdot 2}\\
&=\frac{(\Phi_\Theta(x,x_{j+1})-\Phi_\Theta(x_{j+1},X_j)\Phi_\Theta(X_j,X_j)^{-1}\Phi_\Theta(X_j,x))^2}{\Phi_\Theta(x_{j+1},x_{j+1})-\Phi_\Theta(x_{j+1},X_j)\Phi_\Theta(X_j,X_j)^{-1}\Phi_\Theta(X_j,x_{j+1})},
\end{align*}
and the second equality holds since $B_{11\cdot 2}$ is a scalar.

\subsection{Proof of Theorem \ref{theorem}}\label{proofthm1}
Since $(a-b)^2\leq (a+b)^2$ for $a,b\geq 0$, equation \eqref{inequation} can be bounded as
\begin{align*}
R(x_{j+1})&=\frac{(\Phi_\Theta(x,x_{j+1})-\Phi_\Theta(x_{j+1},X_j)\Phi_\Theta(X_j,X_j)^{-1}\Phi_\Theta(X_j,x))^2}{\Phi_\Theta(x_{j+1},x_{j+1})-\Phi_\Theta(x_{j+1},X_j)\Phi_\Theta(X_j,X_j)^{-1}\Phi_\Theta(X_j,x_{j+1})}\\
&\leq \frac{(\Phi_\Theta(x,x_{n+1})+\Phi_\Theta(x_{j+1},X_j)\Phi_\Theta(X_j,X_j)^{-1}\Phi_\Theta(X_j,x))^2}{\Phi_\Theta(x_{j+1},x_{j+1})-\Phi_\Theta(x_{j+1},X_j)\Phi_\Theta(X_j,X_j)^{-1}\Phi_\Theta(X_j,x_{j+1})}.
\end{align*}

Also, since \[a^TB^{-1}b\leq\|a\|_2\|B^{-1}b\|_2\] and 
\[a^TB^{-1}a\leq \|a\|^2_2\lambda_{\max}(B^{-1})=\|a\|^2_2/\lambda_{\min}(B),\]
where $\lambda_{\max}(\cdot)$ and $\lambda_{\min}(\cdot)$ denote the maximum and minimum eigenvalues of a specific matrix, respectively, the inequality becomes

\begin{align*}
R(x_{j+1})&\leq \frac{(\Phi_\Theta(x,x_{j+1})+\|\Phi_\Theta(x_{j+1},X_j)\|_2\|\Phi_\Theta(X_j,X_j)^{-1}\Phi_\Theta(X_j,x)\|_2)^2}{1-\|\Phi_\Theta(X_j,x_{j+1})\|^2_2/\lambda_{\min}},		 
\end{align*}
where $\lambda_{\min}$ is the minimum eigenvalue of $\Phi_\Theta(X_j,X_j)$. 

Furthermore, according to the definition $d_{\min}(x_{j+1})$ of the minimum (Mahalanobis-like) distance as \eqref{mindisdef} and the definition $\phi(\cdot)$ as in Theorem \ref{theorem}, we have \[\Phi_\Theta(u,x_{j+1})\leq\phi(d_{\min}(x_{j+1})), \text{ for any } u\in\{x, X_j\},\]
which also implies 
\[
\|\Phi_\Theta(x_{j+1},X_j)\|_2=\|\Phi_\Theta(X_j,x_{j+1})\|_2\leq \sqrt{j}\phi(d_{\min}(x_{j+1})),
\]
therefore the inequality can be bounded as 
\begin{equation}\label{finalinequ}
R(x_{j+1})\leq \frac{(\phi(d_{\min}(x_{j+1}))+\sqrt{j}\phi(d_{\min}(x_{j+1}))\|\Phi_\Theta(X_j,X_j)^{-1}\Phi_\Theta(X_j,x)\|_2)^2}{1-j\phi^2(d_{\min}(x_{j+1}))/\lambda_{\min}}.
\end{equation}

Thus, for $\delta>0$, if 
\[\frac{(\phi(d_{\min}(x_{j+1}))+\sqrt{j}\phi(d_{\min}(x_{j+1}))\|\Phi_\Theta(X_j,X_j)^{-1}\Phi_\Theta(X_j,x)\|_2)^2}{1-j\phi^2(d_{\min}(x_{j+1}))/\lambda_{\min}}\leq \delta\] %is true, which can be rewritten as 
or equivalently
\[d_{\min}(x_{j+1})\geq\phi^{-1}\left(\sqrt{\frac{\delta}{(1+\sqrt{j}\|\Phi_\Theta(X_j,X_j)^{-1}\Phi_\Theta(X_j,x)\|_2)^2+j\delta/\lambda_{\min}}}\right),\] then by \eqref{finalinequ}, $R(x_{j+1})\leq\delta$.

\subsection{Proof of Theorem \ref{featureThm}}\label{proof:reducevariancebyfeature}
Define $U(t)=(\sqrt{\lambda_1}\phi_1(t),\sqrt{\lambda_2}\phi_2(t),\ldots,\sqrt{\lambda_D}\phi_D(t))^T\in\mathbb{R}^{D\times 1}$, where $\phi_i(\cdot),i=1,\ldots,D$ is an orthonormal basis of $L^2(\Omega)$ consisting of the eigenfunctions of $T$, defined in \eqref{defineT}, and $\lambda_1\geq\lambda_2\geq\ldots\geq\lambda_D$ are corresponding eigenvalues. According to \eqref{approx_eigen_decomp}, the approximated eigen-decomposition can be rewritten as 
\[
\Phi(x,y) \approx U^T(x)U(y).
\]

Also, define a matrix $U(K)=[U(k_1),U(k_2),\ldots,U(k_n)]\in\mathbb{R}^{D\times n}$ for $K = (k_1,k_2,\ldots,k_n)$. Then, the reduction in variance $R(x_{j+1})$ in \eqref{reducevariance} can be approximated to the following:

\begin{align*}
R(x_{j+1})&=\frac{(\Phi_\Theta(x,x_{j+1})-\Phi_\Theta(x_{j+1},X_j)\Phi_\Theta(X_j,X_j)^{-1}\Phi_\Theta(X_j,x))^2}{\Phi_\Theta(x_{j+1},x_{j+1})-\Phi_\Theta(x_{j+1},X_j)\Phi_\Theta(X_j,X_j)^{-1}\Phi_\Theta(X_j,x_{j+1})}\\
%&=\frac{(\Phi_\Theta(x_{j+1},x)-\Phi_\Theta(x_{j+1},X_j)\Phi_\Theta^{-1}(X_j,X_j)\Phi_\Theta(X_j,x))^2}{\Phi_\Theta(x_{j+1},x_{j+1})-\Phi_\Theta(x_{j+1},X_j)\Phi_\Theta^{-1}(X_j,X_j)\Phi_\Theta(X_j,x_{j+1})}\\
&\approx\frac{(U_\Theta^T(x_{j+1})U_\Theta(x)-U_\Theta^T(x_{j+1})U_\Theta(X_j)[U_\Theta^T(X_j)U_\Theta(X_j)]^{-}U_\Theta^T(X_j)U_\Theta(x))^2}{U_\Theta^T(x_{j+1})U_\Theta(x_{j+1})-U_\Theta^T(x_{j+1})U_\Theta(X_j)[U_\Theta^T(X_j)U_\Theta(X_j)]^{-}U_\Theta^T(X_j)U_\Theta(x_{j+1})}\\
	&=\frac{\{U_\Theta^T(x_{j+1})[I-U_\Theta(X_j)[U_\Theta^T(X_j)U_\Theta(X_j)]^{-}U_\Theta^T(X_j)]U_\Theta(x)\}^2}{U_\Theta^T(x_{j+1})[I-U_\Theta(X_j)[U_\Theta^T(X_j)U_\Theta(X_j)]^{-}U_\Theta^T(X_j)]U_\Theta(x_{j+1})},
\end{align*}
where $[U_\Theta^T(X_j)U_\Theta(X_j)]^{-}$ denotes a generalized inverse of $[U_\Theta^T(X_j)U_\Theta(X_j)]$.
	
Let $C_{\Theta,X_j}(t)=[I-U_\Theta(X_j)[U_\Theta^T(X_j)U_\Theta(X_j)]^{-}U_\Theta^T(X_j)]U_\Theta(t)$. Then,
\begin{align*}
&C^T_{\Theta,X_j}(x_{j+1})C_{\Theta,X_j}(x)\\
%=&U_\Theta^T(x_{j+1})[I-U_\Theta(X_j)[U_\Theta^T(X_j)U_\Theta(X_j)]^{-}U_\Theta^T(X_j)]^T\\
%&\quad\quad\quad\quad\quad
%[I-U_\Theta(X_j)[U_\Theta^T(X_j)U_\Theta(X_j)]^{-}U_\Theta^T(X_j)]U_\Theta(x)\\
=&U_\Theta^T(x_{j+1})[I-U_\Theta(X_j)[U_\Theta^T(X_j)U_\Theta(X_j)]^{-}U_\Theta^T(X_j)]U_\Theta(x).
\end{align*}
%The second equation holds since
%\begin{align*}
%&[I-U_\Theta(X_j)[U_\Theta^T(X_j)U_\Theta(X_j)]^{-}U_\Theta^T(X_j)]^T[I-U_\Theta(X_j)[U_\Theta^T(X_j)U_\Theta(X_j)]^{-}U_\Theta^T(X_j)]\\
%=&I-2U_\Theta(X_j)[U_\Theta^T(X_j)U_\Theta(X_j)]^{-}U_\Theta^T(X_j)+\\
%&\quad\quad\quad\quad U_\Theta(X_j)[U_\Theta^T(X_j)U_\Theta(X_j)]^{-}U_\Theta^T(X_j)U_\Theta(X_j)[U_\Theta^T(X_j)U_\Theta(X_j)]^{-}U_\Theta^T(X_j)\\
%=&I-2U_\Theta(X_j)[U_\Theta^T(X_j)U_\Theta(X_j)]^{-}U_\Theta^T(X_j)+U_\Theta(X_j)[U_\Theta^T(X_j)U_\Theta(X_j)]^{-}U_\Theta^T(X_j)\\
%=&I-U_\Theta(X_j)[U_\Theta^T(X_j)U_\Theta(X_j)]^{-}U_\Theta^T(X_j).
%\end{align*}
Similarly,
\[
C^T_{\Theta,X_j}(x_{j+1})C_{\Theta,X_j}(x_{j+1})=U_\Theta^T(x_{j+1})[I-U_\Theta(X_j)[U_\Theta^T(X_j)U_\Theta(X_j)]^{-}U_\Theta^T(X_j)]U_\Theta(x_{j+1}).
\]
Therefore,
\begin{displaymath}
R(x_{n+1})\approx\frac{(C_{\Theta,X_j}^T(x_{j+1})C_{\Theta,X_j}(x))^2}{C_{\Theta,X_j}^T(x_{j+1})C_{\Theta,X_j}(x_{j+1})}=\|C_{\Theta,X_j}(x)\|^2_2\cos^2(\vartheta),
\end{displaymath}
where $\vartheta$ is the angle between $C_{\Theta,X_j}(x)$ and $C_{\Theta,X_j}(x_{j+1})$.
\bibliography{bib}
\end{document}